\documentclass[acmsmall]{acmart}
\usepackage{amsmath}
\usepackage{amsfonts}
\usepackage{algorithm}
\usepackage{algorithmic}
\usepackage{graphicx}
\usepackage{textcomp}
\usepackage{xcolor}
\usepackage{subfiles}
\usepackage{multirow}
\usepackage{booktabs}
\usepackage{vcell}
\usepackage{enumitem}
\usepackage{xspace}
\usepackage{hyperref}
\usepackage{colortbl}
\usepackage{tikz}
\usepackage{listings, color}
\usepackage{makecell}
\usepackage{array}
\usepackage{simplebnf}
\usepackage{tikz}
\usetikzlibrary{arrows.meta}
\usetikzlibrary{positioning}
\usepackage{needspace}

\newcommand{\thl}[1]{\cellcolor[HTML]{EFEFEF}#1}

\usepackage{amsthm}
\theoremstyle{plain}

\theoremstyle{definition}
\newtheorem{definition}{Definition}
\newtheorem{problem}{Problem}[section]

\newcommand{\coedpilot}{CoEdPilot\xspace}

\newcommand{\keeper}{EditFlow\xspace}
\newcommand{\editflow}{EditFlow\xspace}

\usepackage[most]{tcolorbox}
\tcbuselibrary{listingsutf8}

\newcommand{\highlight}[2]{%
    \textcolor{black}{#1}%
}
\newtcbox{\kbdbox}{on line,
  box align=base,
  colback=gray!10,
  colframe=black!60,
  boxrule=0.5pt,
  arc=3pt,
  top=1pt,bottom=1pt,left=2pt,right=2pt,
  boxsep=0pt,
  fontupper=\ttfamily\footnotesize
}

\definecolor{ForestGreen}{RGB}{34,139,34}
\lstdefinestyle{pythonstyle}{
    language=Python,
    basicstyle=\ttfamily\footnotesize,
    keywordstyle=\color{blue},
    stringstyle=\color{red},
    commentstyle=\color{ForestGreen},
    numberstyle=\tiny\color{gray},
    frame=single,
    showstringspaces=false,
    breaklines=true,
    tabsize=2
}

\AtBeginDocument{%
  }

\setcopyright{cc}
\setcctype{by}
\acmDOI{10.1145/3798249}
\acmYear{2026}
\acmJournal{PACMPL}
\acmVolume{10}
\acmNumber{OOPSLA1}
\acmArticle{141}
\acmMonth{4}
\received{2025-10-10}
\received[accepted]{2026-02-17}

\begin{document}

\title[EditFlow: Benchmarking and Optimizing Code Edit Recommendation Systems ...]{EditFlow: Benchmarking and Optimizing Code Edit Recommendation Systems via Reconstruction of Developer Flows}

\author{Chenyan Liu}
\orcid{0009-0005-0554-4028}
\email{chenyan@u.nus.edu}
\affiliation{%
  \institution{Shanghai Jiao Tong University}
  \city{Shanghai}
  \country{China}
}
\affiliation{%
  \institution{National University of Singapore}
  \city{Singapore}
  \country{Singapore}
}

\author{Yun Lin}
\orcid{0000-0001-8255-0118}
\email{lin_yun@sjtu.edu.cn}
\authornote{Corresponding author.}
\affiliation{%
  \institution{Shanghai Jiao Tong University}
  \city{Shanghai}
  \country{China}
}

\author{Jiaxin Chang}
\orcid{0009-0001-0199-1059}
\email{cjx001234@sjtu.edu.cn}
\affiliation{%
  \institution{Shanghai Jiao Tong University}
  \city{Shanghai}
  \country{China}
}

\author{Jiawei Liu}
\orcid{0009-0001-9823-0538}
\email{amberwabi2003@sjtu.edu.cn}
\affiliation{%
  \institution{Shanghai Jiao Tong University}
  \city{Shanghai}
  \country{China}
}

\author{Binhang Qi}
\orcid{0000-0002-0828-5544}
\email{qibh@nus.edu.sg}
\affiliation{%
  \institution{National University of Singapore}
  \city{Singapore}
  \country{Singapore}
}

\author{Bo Jiang}
\orcid{0009-0000-1080-3278}
\email{jiangbo.jacob@bytedance.com}
\affiliation{%
  \institution{Bytedance Network Technology}
  \city{Beijing}
  \country{China}
}

\author{Zhiyong Huang}
\orcid{0000-0002-1931-7775}
\email{huangzy@comp.nus.edu.sg}
\affiliation{%
  \institution{National University of Singapore}
  \city{Singapore}
  \country{Singapore}
}

\author{Jin Song Dong}
\orcid{0000-0002-6512-8326}
\email{dcsdjs@nus.edu.sg}
\affiliation{%
  \institution{National University of Singapore}
  \city{Singapore}
  \country{Singapore}
}

\renewcommand{\shortauthors}{C. Liu, Y. Lin, J. Chang, J. Liu, B. Qi, B. Jiang, Z. Huang, and J. S. Dong}

\begin{abstract}
Large language models (LLMs) for code editing have achieved remarkable progress, yet recent empirical studies reveal a fundamental disconnect between \textit{technical accuracy} and \textit{developer productivity}. Despite their strong benchmark performance, developers complete tasks 19\% slower when using AI assistance, with over 68.81\% of recommendations disrupting their mental flow. This misalignment stems from the use of static commit snapshots that lack temporal information, causing models to optimize for end results rather than the incremental, context-sensitive steps that align with developers’ natural reasoning process.

To bridge this gap, we present \textbf{EditFlow}, which benchmarks and optimizes subsequent code edit recommendation systems through the reconstruction of developer editing flows. EditFlow addresses three key challenges. 
First, collecting edit-order data that reflects developers' flow is inherently difficult: manual annotation introduces prohibitive overhead, while development logs capture only single trajectories instead of all plausible editing flows. 
Second, benchmarking recommendation performance against developers’ ongoing editing flow requires a \textit{digital-twin-like simulation} that can faithfully simulate the editing process.
Third, existing heterogeneous systems vary drastically in scale and architecture, posing challenges for developing a unified optimization strategy that endows all models with \textit{mental-flow awareness} regardless of design or capability.

To overcome these challenges, we propose three tightly coupled components: 
(1) a \textbf{prompt auto-tuning mechanism} that learns an optimized prompt for inferring the relative order between two edits,
(2) a \textbf{digital twin} that replays reconstructed edit sequences to simulate developers’ editing process, and 
(3) \textbf{\keeper}, a \textbf{unified optimization strategy} that optimizes the flow continuity of subsequent edit suggestions based on developers' ongoing flow.

Evaluations across diverse benchmarks, including manually annotated commits, real-world industrial code, and open-source repositories, show that EditFlow improves order reconstruction accuracy by 63.81\%, reduces flow violations by over 75\%, and boosts recommendation precision by 66.99\%. A user study with 32 developers further demonstrates 25.11\% faster task completion and significantly higher perceived recommendation quality. 
To the best of our knowledge, EditFlow is the first to evaluate and optimize code edit recommendation systems from the perspective of developers’ mental flow, establishing \textit{flow-awareness} as a new dimension for advancing human-AI code collaboration.
\end{abstract}

\begin{CCSXML}
<ccs2012>
   <concept>
       <concept_id>10011007.10011074.10011111.10011696</concept_id>
       <concept_desc>Software and its engineering~Maintaining software</concept_desc>
       <concept_significance>500</concept_significance>
       </concept>
   <concept>
       <concept_id>10011007.10011006.10011073</concept_id>
       <concept_desc>Software and its engineering~Software maintenance tools</concept_desc>
       <concept_significance>500</concept_significance>
       </concept>
   <concept>
       <concept_id>10010147.10010178</concept_id>
       <concept_desc>Computing methodologies~Artificial intelligence</concept_desc>
       <concept_significance>500</concept_significance>
       </concept>
 </ccs2012>
\end{CCSXML}

\ccsdesc[500]{Software and its engineering~Maintaining software}
\ccsdesc[500]{Software and its engineering~Software maintenance tools}
\ccsdesc[500]{Computing methodologies~Artificial intelligence}
%
\keywords{code edit recommendation, developer mental flow, large language models}


\maketitle

%
\section{Introduction} \label{sec:introduction}
Large language models (LLMs) for code have achieved remarkable progress, 
supporting tasks from code completion \cite{izadi2024language} to documentation synthesis \cite{cai2024fly}. 
As these capabilities mature\footnote{ChatGPT-o1-mini achieves 92.4\% accuracy on HumanEval \cite{o1-mini}}, 
recent efforts focus on integrating LLMs into developers' real-time workflows through \textit{subsequent edit recommendation}: suggesting the next plausible code edit immediately after a developer performs one.
Tools such as Cursor~\cite{cursor}, Claude Code~\cite{claude-code}, and \coedpilot~\cite{code-edit-pilot} exemplify this shift by offering proactive, context-sensitive suggestions. 
These systems support an interactive loop where model outputs are accepted, modified, or dismissed by the developer,
with the ultimate goal of reducing effort during code construction and maintenance.

However, recent empirical evidence suggests that this promise is far from guaranteed.
The controlled trial conducted by Becker et al. \cite{becker2025measuring} shows that,
despite the strong standalone benchmark performance \cite{huang2024olympicarenamedalranksintelligent} of Claude models (Claude 3.5 Sonnet achieving 92.0\% accuracy on HumanEval \cite{claude35news} and Claude 3.7 Sonnet reaching 70.3\% on SWE-Bench Verified \cite{claude37news}), developers using Cursor Pro integrated with these models completed tasks \textbf{19\% slower} than when working unaided.
This striking result reveals a fundamental disconnect: \textbf{accuracy is not equivalent to productivity}.
\highlight{A key factor underlying this disconnect is developers' \emph{mental flow}.
Mental flow is a well-established psychological construct defined as a mental state in which a person performing an activity is fully immersed in a feeling of energized focus, full involvement, and enjoyment~\cite{csikszentmihalyi1990flow}.
It is also a core determinant of developer productivity in both academic and industrial frameworks~\cite{meyer2014software,forsgren2021space,noda2023devex}.
Empirical studies consistently show that maintaining uninterrupted flow yields substantial productivity gains~\cite{Eirini_GitHub},
while even brief interruptions incur disproportionate recovery costs~\cite{mark2008cost,sum2015analysis}.}{(C1,C13)}

\highlight{To examine whether this disconnect stems from disruptions to developers' mental flow, we conduct two complementary studies.
First, through an analysis of 50 real-world commits using Cursor and Claude Code, we show that 68.81\% of model recommendations disrupt developers' ongoing mental flow, including 8.83\% of suggestions that are technically correct but ill-timed.
Second, through a controlled user study, we demonstrate that mitigating such flow disruptions leads to a 25.11\% improvement in task efficiency.}{(C1,C13)}

We argue that such disruption of mental flow by existing code editing tools arises from limitations in how current code LLMs are trained.
Most training data consists of commit snapshots, which represent only the final state of code changes, 
with the edits' temporal information lost during the commit process.
As a result, LLMs learn from outcomes rather than the editing process itself, 
missing the temporal order that reflects developers' natural mental flow.
This causes a fundamental mismatch: while models optimize for end results, 
developers work through incremental, context-sensitive steps that follow logical dependencies and cognitive continuity.

\highlight{Given this fundamental mismatch, our core objective is to improve AI code editing assistants by aligning suggestions with developers' ongoing mental flow.
However, realizing this objective is not straightforward, as it requires modeling aspects of the editing process that are largely absent from existing systems.
In particular, it depends on the ability to infer cognitive order relations between edits, and its effectiveness must be validated through process-level evaluation rather than outcome-based metrics.
Specifically, we identify three interconnected challenges that impede the development of flow-aware systems.}{(C3)}

\begin{itemize}[leftmargin=*]

    \item \highlight{\textbf{Challenge 1. The lack of flow-grounded edit-order data.}
    Avoiding flow violations fundamentally requires understanding \emph{which edits should follow which others} under developers' natural reasoning processes.
    However, such data is difficult to obtain:
    manual annotation is costly and unscalable,
    development logs capture only a single realized trajectory,
    and LLM-based annotation often fails to generalize due to brittle prompts.}{(C3)}

    \item \highlight{\textbf{Challenge 2. The lack of process-level flow-aware evaluation.}
    Existing evaluation protocols for code editing assistants are predominantly outcome-oriented.
    For instance, systems are commonly evaluated using edit location precision and recall, content-level similarity metrics such as BLEU, or task completion measured by test passing.
    These metrics fail to capture the temporal appropriateness and cognitive continuity of edit suggestions, making it difficult to assess or improve flow-awareness across systems.}{(C3)}

    \item \highlight{\textbf{Challenge 3. The difficulty of unifying heterogeneous systems.}
    Existing code recommendation systems are highly heterogeneous,
    ranging from academic prototypes to closed-source IDE assistants and CLI-based tools.
    This heterogeneity, together with diverse interfaces, inference behaviours, and latency constraints,
    makes it difficult to consistently evaluate flow-related behaviours
    and to systematically reduce flow violations across different systems.}{(C3)}

\end{itemize}

\highlight{To address the aforementioned challenges, we propose three tightly integrated components.
(1) A \textbf{prompt auto-tuning strategy} that leverages a small human-annotated dataset of edit orderings to learn an optimized prompt, capable of reconstructing the correct edit order between any two edits.
(2) A \textbf{digital twin framework} that simulates developers' editing trajectories and their interactions with code recommendation systems, guided by the recovered edit order,
enabling faithful evaluation of whether system-generated recommendations align with developers' ongoing mental flow.
(3) Building on the above components, we propose our flow-alignment optimization solution, \textbf{\keeper}, a unified wrapper for existing recommendation systems, which leverages the learned prompt to assess flow coherence, filters out flow-breaking recommendations.
These components are interdependent: the prompt auto-tuning strategy (1) provides the core capability for both the digital twin (2) and \keeper (3), while the digital twin (2) enables offline validation of the optimization (3).}{(C3)}

We comprehensively evaluate EditFlow through four research questions. 
On the annotated dataset (\textbf{RQ1}), the auto-tuned prompt achieves 87.26\% accuracy, substantially outperforming zero/few-shot and hand-crafted baselines by 63.81\%. 
On a real-world industrial dataset (\textbf{RQ2}), it demonstrates strong robustness with only 30 flow violations, representing a reduction of over 75\% compared to the best-performing baseline, and exhibiting close alignment with practical editing behavior. 
For edit recommendation (\textbf{RQ3}), the flow-aware optimization boosts precision across state-of-the-art systems (\textit{Cursor}, \textit{Claude Code}, and \textit{CoEdPilot}), yielding an average precision improvement of 66.99\%. 
Finally, a user study (\textbf{RQ4}) with 32 participants over 3 editing tasks shows 25.11\% faster task completion and higher perceived recommendation quality.


\needspace{3\baselineskip}
We summarize our main contributions are as follows:
\begin{itemize}[leftmargin=*]
    \item \highlight{To the best of our knowledge, we are the first to apply the concept of \textbf{mental flow in code editing}, and propose \textbf{\keeper}, a post-processing solution that enables \textbf{flow-aligned edit recommendation}.}{(C3)}

    \item \highlight{To operationalize mental flow, we propose a \textbf{prompt auto-tuning strategy} that learns an optimized prompt to infer edit orders, enabling accurate recovery of editing sequences as a structured representation of developers' mental flow.}{(C3)}

    \item \highlight{To evaluate flow alignment at the process level, we design a \textbf{process-based digital twin evaluation framework} that simulates realistic editing trajectories guided by the recovered mental flow graph, enabling flow-aware assessment of heterogeneous recommendation systems.}{(C3)}
    
    \item We demonstrate that \keeper substantially improves \textbf{flow-aligned recommendation precision} and \textbf{developer productivity} through extensive experiments and a controlled user study.
    
    \item We implement \keeper as a \textbf{VS Code extension} that provides visualized flow simulation and flow-aware recommendation filtering, with supplementary materials (e.g., demonstration videos, datasets and the learned prompt) available at our anonymous website~\cite{homepage}.
\end{itemize}

Building on these contributions, the remainder of the paper proceeds as follows.
We begin with a motivating commit example that illustrates how existing code editing tools violate developers’ mental flow during the editing process (\autoref{sec:motivation}).
We then formalize mental flow as pairwise edit order relations and define the problem of flow-aligned edit recommendation (\autoref{sec:problem_formulation}).
To support process-level assessment, we introduce a digital-twin-based evaluation framework with corresponding metrics (\autoref{sec:framework}).
Using this framework, we conduct an empirical validation on real-world commits to quantify the prevalence and types of flow-violating recommendations in existing systems (\autoref{sec:empirical_results}).
Finally, we present the design and implementation of \keeper, including prompt auto-tuning for edit order recovery and flow-aware recommendation filtering, and evaluate its effectiveness through extensive experiments and a controlled user study (\autoref{sec:methodology} \& \autoref{sec:experiment}).

\section{Motivating Example} \label{sec:motivation}
\begin{figure*}[ht]
    \centering
    \includegraphics[width=0.85\linewidth]{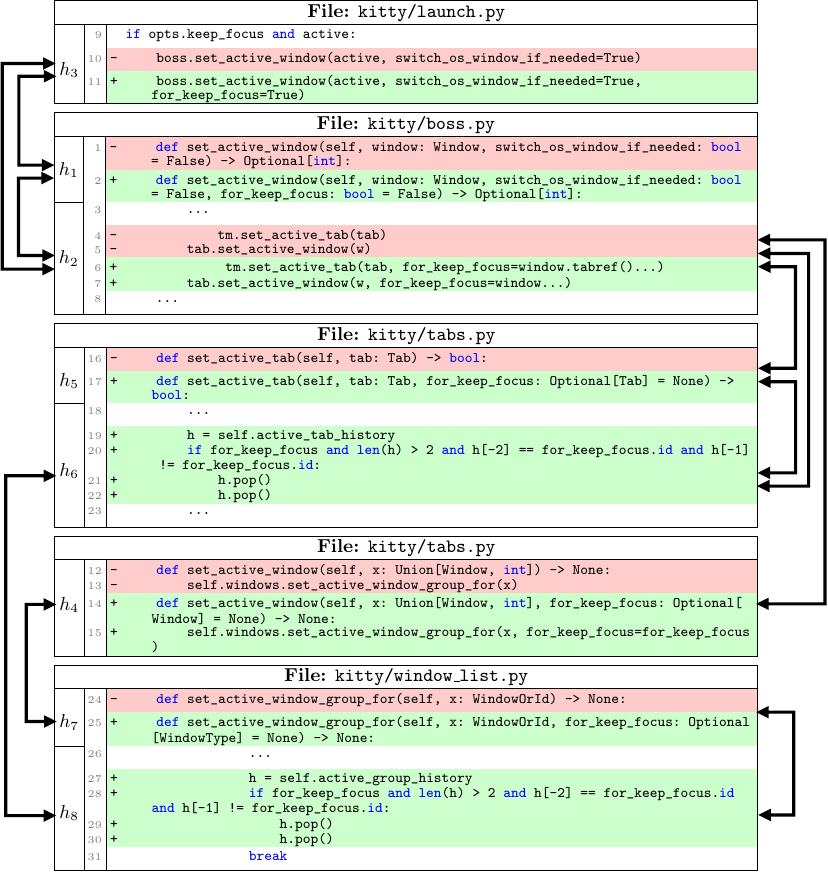}
    \caption{Motivating example: Edit partial order graph}
    \label{fig:motivation}
    \vspace{-5pt}
\end{figure*}

\autoref{fig:motivation} shows a real-world commit from project \texttt{kovidgoyal/kitty}\footnote{\url{https://github.com/kovidgoyal/kitty/commit/c4c62c15}},
which contains 8 edit hunks (denoted by $h_i$) across 4 files. 
Hunks such as $h_1 \& h_2$, $h_5 \& h_6$, and $h_7 \& h_8$ are grouped as paired edits, representing the function signature and corresponding implementation body.
An edge from $h_i$ to $h_j$ represents a partial order that users may proceed to $h_j$ after completing $h_i$. All edges in the graph are bidirectional.
This commit demonstrates a typical cascading change: 
adding a new parameter \texttt{for\_keep\_focus} to a function call at $h_3$,
which creates a ripple effect throughout the call chain.
The change first propagates to the target function's signature ($h_1$) and implementation ($h_2$), 
then branches into two parallel paths updating dependent functions: 
one path handles \texttt{set\_active\_window()} modifications ($h_2 \rightarrow h_4 \rightarrow h_7, h_8$), 
while the other addresses \texttt{set\_active\_tab()} changes ($h_2 \rightarrow h_5, h_6$).


For a set of edit hunks $\mathbb{H} = \{h_1, h_2, ..., h_8\}$ that together achieve an editing goal (adding the \texttt{for\_keep\_focus} parameter), 
the theoretical number of possible recommendation sequences is $|\mathbb{H}|! = 40,320$. 
However, the cognitively feasible sequences represent only a tiny fraction of this vast space. 
For example, after completing $h_3$, an AI assistant faces 7 possible immediate recommendations for the next edit, 
yet only $h_1$ and $h_2$ represent cognitively coherent choices that preserve the developer's mental flow continuity.
Beyond treating all remaining edits as equally viable, current systems compound this challenge by introducing false positives that further dilute the precision of flow-coherent suggestions. 
Users must bear additional cognitive overhead to parse and validate irrelevant recommendations, 
ultimately causing the assistance to reduce rather than enhance productivity.

To validate this analysis, we replicated the editing process in leading AI coding tools. 
After completing $h_3$, Claude Code immediately jumped to recommending $h_5$, skipping the natural intermediate steps $h_1$ and $h_2$ entirely. 
When we then completed the cognitively natural flow $h_3 \rightarrow h_1 \rightarrow h_2$ and queried Cursor for subsequent recommendations, it correctly suggested $h_5$ but failed in other aspects of sequencing. 
Cursor recommended reverting the user-completed edit $h_2$, creating cognitive dissonance by contradicting a previously applied change.

\highlight{Taken together, these failures highlight the urgent need for \emph{flow-aware optimization} (\autoref{prob:optimization}) to filter flow-violating suggestions.
Achieving this requires (1) inferring pairwise cognitive order relations (\autoref{prob:recovery}), and (2) a process-level evaluation framework to assess flow alignment throughout the editing process (\autoref{prob:evaluation}).
We address these needs through a unified pipeline. We first develop a prompt auto-tuning strategy for edit order recovery, then leverage the recovered flow graph to drive a digital twin framework for process-level evaluation, and finally build \keeper as a post-processing wrapper that filters and re-ranks recommendations across heterogeneous coding assistants.}{(C3)}

\section{Problem Formulation}\label{sec:problem_formulation}
\highlight{As established in earlier sections, existing coding assistants often disrupt developers' mental flow,
motivating us to formulate developers' mental flow as perceived order relations among edit hunks.
This formulation enables reasoning about which edits constitute cognitively coherent subsequent steps under a given editing context, forming the basis for flow-aware recommendation.}{(C3)}

\subsection{Preliminaries}

\begin{definition}[Edit Hunk]
\label{def:edit-hunk}
An \emph{edit hunk} $h$ is a tuple $(f, \ell_{\text{start}}, \ell_{\text{end}}, c_{\text{pre}}, c_{\text{post}})$ where:
\begin{itemize}
    \item $f$ is the file path;
    \item $\ell_{\text{start}}, \ell_{\text{end}}$ are line numbers defining the edit location;
    \item $c_{\text{pre}}, c_{\text{post}} \in \Sigma^*$ are the code content before and after the edit, where $\Sigma^*$ denotes the set of all strings (including the empty string $\varepsilon$).
\end{itemize}
Let $\mathbb{H} = \{h_1, h_2, \ldots, h_n\}$ denote all edit hunks in a commit $C$.
\end{definition}

\begin{definition}[Pairwise Edit Order]
\label{def:pairwise-order}
\highlight{We use the term \emph{partial order} to describe developers' perceived edit precedence from a cognitive perspective.
Unlike the mathematical definition, our notion of partial order does \emph{not} require transitivity: mental flow captures \emph{local} continuity between adjacent steps, and skipping intermediate edits may incur a cognitive gap that disrupts rather than preserves flow.}{(C8)}

For any pair of edit hunks $(h_i, h_j) \in \mathbb{H} \times \mathbb{H}$, we define their ground-truth partial order relation as $\lambda(h_i, h_j) \in \mathcal{L}$, where: $\mathcal{L} = \{\prec, \succ, \sim, \perp\}$.

\textbf{Interpretation}:
\begin{itemize}
    \item $\lambda(h_i, h_j) = \prec$: After completing $h_i$, a developer can \emph{naturally infer} the need for $h_j$ based on cognitive continuity.  
    \highlight{\textit{Example:} If $h_i$ removes a code block (\texttt{cut}) from one location, and $h_j$ subsequently inserts the same block elsewhere (\texttt{paste}), then $h_i \prec h_j$.}{(C6)}
    \item $\lambda(h_i, h_j) = \succ$: Conversely, completing $h_j$ naturally suggests $h_i$.
    \item $\lambda(h_i, h_j) = \sim$: Either direction preserves cognitive flow (symmetric).  
    \highlight{\textit{Example:} If $h_i$ deletes a parameter in a function's definition or signature, and $h_j$ removes the corresponding argument at all call sites, then either order can be cognitively coherent, i.e., $h_i \sim h_j$.}{(C6)}
    \item $\lambda(h_i, h_j) = \perp$: No apparent cognitive connection.
\end{itemize}

\end{definition}
\highlight{For conciseness, we limit the number of partial-order edit pair examples in the paper and provide additional illustrative cases on our anonymous website \cite{homepage}.}{(C6)}

\noindent\textbf{Remark (Cognitive vs. Technical Order).} 
The relation $\lambda$ characterizes \emph{mental flow} rather than strict syntactic or semantic dependencies. 
Specifically, $\lambda(h_i, h_j) = \prec$ does not imply that $h_i$ must precede $h_j$ to maintain program correctness or prevent syntax errors. 
Instead, it indicates that completing $h_i$ naturally renders $h_j$ a cognitively coherent next step within the developer’s reasoning process. 
Likewise, $\lambda(h_i, h_j) = \sim$ denotes a bidirectional cognitive relation, 
in which either edit may conceptually give rise to the other. 
For example, if $h_i$ introduces a new function definition and $h_j$ adds a corresponding function call, 
both orderings are cognitively plausible: a developer might first define the function to be invoked later, 
or alternatively, begin by writing the call and subsequently implement the function to fulfill the intent. 
We do not dismiss the latter direction merely because performing $h_j$ first may lead to transient reference or import errors, 
as such temporary inconsistencies do not disrupt the continuity of the developer’s mental flow.

\noindent\textbf{Remark (Approximating Ground-Truth Order).}
The ground-truth relation $\lambda(h_i, h_j)$ reflects an intrinsic cognitive property. 
Although collecting edit logs may appear to be an intuitive way to obtain edit order, 
such logs can only reveal the immediate sequence between consecutive edits and, at best, capture one possible ordering among many. 
They cannot recover the full set of pairwise relations within the edit set. 
Therefore, in practice, we approximate $\lambda$ via human annotation or inference by large language models, as described in \autoref{sec:auto-tuning}. Regardless of the approach, all annotations aim to capture the underlying cognitive order between edits.

\begin{definition}[Mental Flow Graph] \label{def:flow-graph}
Given edit hunks $\mathbb{H}$ and pairwise order labels $\lambda$, 
the \emph{mental flow graph} is a directed graph $G = (\mathbb{H}, \mathcal{E})$, 
where the edge set $\mathcal{E}$ is defined as:
\[
\mathcal{E} =
\{(h_i, h_j) \mid \lambda(h_i, h_j) = \prec\}
\cup
\{(h_j, h_i) \mid \lambda(h_i, h_j) = \succ\}
\cup
\{(h_i, h_j), (h_j, h_i) \mid \lambda(h_i, h_j) = \sim\}.
\]
\end{definition}

\highlight{\begin{definition}[One-Hop Successor]
\label{def:successor}
Given $G = (\mathbb{H}, \mathcal{E})$ and a set of completed edits $H_{\text{prior}} \subseteq \mathbb{H}$,
an edit $h' \in \mathbb{H} \setminus H_{\text{prior}}$ is a \emph{one-hop successor} of $H_{\text{prior}}$
if:
\[
\exists h \in H_{\text{prior}} : (h, h') \in \mathcal{E}.
\]
We define the set of one-hop successors of $H_{\text{prior}}$ as:
\[
\text{Succ}(H_{\text{prior}}, G)
= \{\, h' \in \mathbb{H} \setminus H_{\text{prior}}
\mid \exists h \in H_{\text{prior}} : (h, h') \in \mathcal{E} \,\}.
\]
\end{definition}}{(C8)}

\subsection{Problem Statements}
\label{sec:problem-statement}

\begin{problem}[Mental Flow Order Recovery]
\label{prob:recovery}

\highlight{Given a commit with edit hunks $\mathbb{H} = \{h_1, \ldots, h_n\}$,
the problem is to design a recovery method $M$ that infers the partial order relation
$\lambda: \mathbb{H} \times \mathbb{H} \to \mathcal{L}$
(as in \autoref{def:pairwise-order}) between any two edits,
which captures developers' perceived precedence between edits.
By applying the recovered order relation to all edit pairs,
the mental flow graph $G = (\mathbb{H}, \mathcal{E})$
(as in \autoref{def:flow-graph})
can be induced accordingly,
and serve as the ground-truth input for \autoref{prob:evaluation}.
The recovery method $M$ is also reused in \autoref{prob:optimization} to guide flow-aware optimization.}{(C3)}
\end{problem}

\begin{problem}[Process-driven Flow Alignment Evaluation]
\label{prob:evaluation}

\highlight{Given a code edit recommendation system $\mathcal{S}$ and a commit
$(\mathbb{H}, G)$ with mental flow graph $G = (\mathbb{H}, \mathcal{E})$ recovered from solving \autoref{prob:recovery},
the problem is to evaluate the extent to which $\mathcal{S}$ produces
flow-aligned edit recommendations during an editing process.
Specifically, the editing process is viewed as a sequence of intermediate states,
each represented by the set of already-applied edits $H^{(t)}_{\mathrm{prior}}$,
with $H^{(0)}_{\mathrm{prior}} = \emptyset$ and $H^{(T)}_{\mathrm{prior}} = \mathbb{H}$.
At each state $t$, the system $\mathcal{S}$ generates a set of candidate edits
$H^{(t)}_{\mathrm{pred}}$ conditioned on $H^{(t)}_{\mathrm{prior}}$.
The goal is to quantify, across intermediate states,
whether and to what extent edits in $H^{(t)}_{\mathrm{pred}}$
are flow-continuous with some edit in $H^{(t)}_{\mathrm{prior}}$
according to the mental flow graph $G$.}{(C3)}
\end{problem}

\begin{problem}[Flow-aware Optimization] \label{prob:optimization}
\highlight{
Given an editing context characterized by a set of completed edits $H_{\text{prior}}$,
and a set of currently suggested edit $H_{\text{pred}}$ predicted by code edit recommendation system $\mathcal{S}$,
we seek a post-processing mechanism $\phi: 2^H \to 2^H$.
The objective of $\phi$ is to reduce flow-violating recommendations
and prioritize flow-coherent subsequent edits, such that:
\[
\text{Precision}(\phi(H_{\text{pred}})) > \text{Precision}(H_{\text{pred}}).
\]
We address this problem in \autoref{sec:optimization}, by designing a flow-aware post-processing strategy, 
utilizing edit order recovery method $M$ from \autoref{prob:recovery}.}{(C3)}
\end{problem}

\noindent\textbf{Problem Relationships.}
\highlight{The three problems directly address the three challenges identified in \autoref{sec:introduction}:
Challenge~1 (the lack of flow-grounded edit-order data) motivates \autoref{prob:recovery};
Challenge~2 (the lack of process-level flow-aware evaluation) motivates \autoref{prob:evaluation};
and Challenge~3 (the difficulty of unifying heterogeneous systems) motivates both \autoref{prob:evaluation} and \autoref{prob:optimization}, as both handle diverse systems.}{(C3)}

\highlight{Beyond this correspondence, the three problems form a layered structure, as shown in \autoref{fig:problem-relationships}:
\autoref{prob:optimization} is the ultimate application goal, and the other two problems provide the necessary infrastructure.
Specifically,
\autoref{prob:recovery} provides the foundation that supports both \autoref{prob:evaluation} and \autoref{prob:optimization}:
the recovered mental flow graph serves as the ground-truth edit order in \autoref{prob:evaluation}, guiding the simulation of the editing process.
Meanwhile, the same recovery method is leveraged as a core component of the flow-aware filtering mechanism in \autoref{prob:optimization}.
\autoref{prob:evaluation} serves as a digital twin for validating
the optimization in \autoref{prob:optimization}, enabling offline
assessment without real user studies.}{(C3)}

\begin{figure}[h]
\centering
\begin{tikzpicture}[
  font=\footnotesize,
  box/.style={rectangle, draw, minimum width=2.5cm, minimum height=1cm, align=center},
  arrow/.style={->, >=stealth, thick}
]

\node[box] (p3) at (0, 2) {\autoref{prob:optimization}\\Optimization\\\textit{\textbf{(application goal)}}};
\node[box] (p1) at (-3, 0) {\autoref{prob:recovery}\\Recovery};
\node[box] (p2) at (3, 0) {\autoref{prob:evaluation}\\Evaluation};

\draw[arrow] (p1) -- node[left, font=\small] {Recovery method $M$} (p3);
\draw[arrow] (p1) -- node[below, font=\small] {Mental flow graph $G$} (p2);
\draw[arrow] (p2) -- node[right, font=\small] {Validation} (p3);

\end{tikzpicture}
\caption{Relationships among the three problems.}
\label{fig:problem-relationships}
\end{figure}
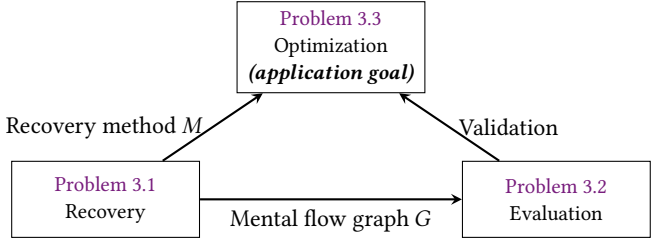

\section{Flow-Aware Evaluation Framework}\label{sec:framework}
\highlight{To enable process-driven, flow-aware evaluation,
as defined in \autoref{prob:evaluation},
we present a digital twin framework that simulates developers' editing processes with corresponding metrics.}{(C3)}

\subsection{Digital Twin} \label{sec:evaluation_pipeline}
To evaluate the flow-awareness of existing coding assistants at the progress level, 
we design a digital twin that simulates real-world editing processes guided by recovered developer flow.
For any git commit containing $n$ edit hunks $\mathbb{H}$, 
we construct an edit partial order graph $G$, as defined in \autoref{def:flow-graph}.
The simulation first checkout the commit to its pre-change version, 
where no edit has been applied yet ($H_{\text{prior}} = \emptyset$),
then randomly selects an edit hunk with minimum in-degree as the initial edit, 
applying it to the project.
\highlight{At each iteration, the digital twin interacts with the system under test (SUT) through a standard interface, supplying updated context as the simulation progresses, including:
(1) the list of previously simulated edits ($H_{\text{prior}} \subset \mathbb{H}$), (2) the access to the codebase at the current simulation progress, where previously simulated edits have all been applied to the project, and
(3) commit messages as edit descriptions.
SUTs are expected to return their recommended edits through this unified interface.}{(C10)}
This design ensures that the simulation system remains general and system-agnostic, 
with specific integration details for each evaluated SUT described in \autoref{sec:rq3_baseline}.
We evaluate the predicted edits through metrics defined in \autoref{subsec:metrics}.
If any of the recommended edit hunks are identified as a $\textsc{Keep}$ suggestion (i.e., it correctly aligns with a valid next edit in the flow graph $G$, see \autoref{def:flow-categories} for the formal definition),
the simulation system randomly selects a $\textsc{Keep}$ suggestion as the subsequent edit, applies it to the project, and updates the prior edits ($H_{\text{prior}} = H_{\text{prior}} \cup \{h\}$); 
otherwise, it selects from one-hop successor hunks ($h\in\text{Succ}(H_{\text{prior}}, G)$, as defined in \autoref{def:successor}).
This design assumes that, in real editing interactions, 
users would not apply incorrect edits to the project.
The simulation continues to request the next recommendation from SUT, 
until all ground-truth edit hunks have been simulated.
To better visualize this process, we implement this digital twin as a VS Code extension that 
capable of replaying this simulated editing process, 
with demonstration video available at our homepage \cite{homepage}.

This framework is used in two places throughout the paper.
First, in the empirical study (\autoref{sec:empirical_results}), it enables a small-scale quantification of whether existing coding assistants violate developers’ mental flow.
Second, in the main experiments (\autoref{sec:rq3}), it is used to evaluate assistant behaviour with and without \editflow, allowing us to assess the effectiveness of flow-aware optimization under the same evaluation protocol.

\subsection{Metrics}\label{subsec:metrics}

We design three categories of metrics to comprehensively evaluate coding assistants, capturing perspectives including flow alignment, recommendation correctness, and resource usage.

\subsubsection{Flow-Aware Metrics} \label{sec:flow-taxonomy}

Based on the flow graph, we classify predicted edits according to their alignment with the developer's current mental flow state.

\highlight{\begin{definition}[Flow Categories]\label{def:flow-categories}
Let $G = (\mathbb{H}, \mathcal{E})$ be a mental flow graph,
$H_{\text{prior}} \subseteq \mathbb{H}$ the sequence of completed edits,
and $h' \in H_{\text{pred}}$ denotes a predicted edit suggested by subsequent edit recommendation systems.
We define four mutually exclusive categories:
\begin{align}
\textsc{Keep}(h', H_{\text{prior}}, G) &\equiv h' \in \mathbb{H} \setminus H_{\text{prior}} \land h' \in \text{Succ}(H_{\text{prior}}, G) \label{eq:keep} \\
\textsc{Jump}(h', H_{\text{prior}}, G) &\equiv h' \in \mathbb{H} \setminus H_{\text{prior}} \land h' \notin \text{Succ}(H_{\text{prior}}, G) \label{eq:jump} \\
\textsc{Revert}(h', H_{\text{prior}}, G) &\equiv h' \in H_{\text{prior}} \label{eq:revert} \\
\textsc{Break}(h', H_{\text{prior}}, G) &\equiv h' \notin \mathbb{H} \label{eq:break}
\end{align}
\end{definition}}{(C8)}

\noindent Figure~\ref{fig:flow_category} illustrates these categories geometrically: A \textsc{Keep} edit maintains cognitive continuity by following an edge in $G$; 
a \textsc{Jump} edit targets a valid hunk but skips intermediate logical steps; 
a \textsc{Revert} edit suggests discarding an already-applied change; 
and a \textsc{Break} edit hallucinates content not present in the ground-truth commit.

\begin{figure}[h]
    \centering
    \includegraphics[width=0.85\linewidth]{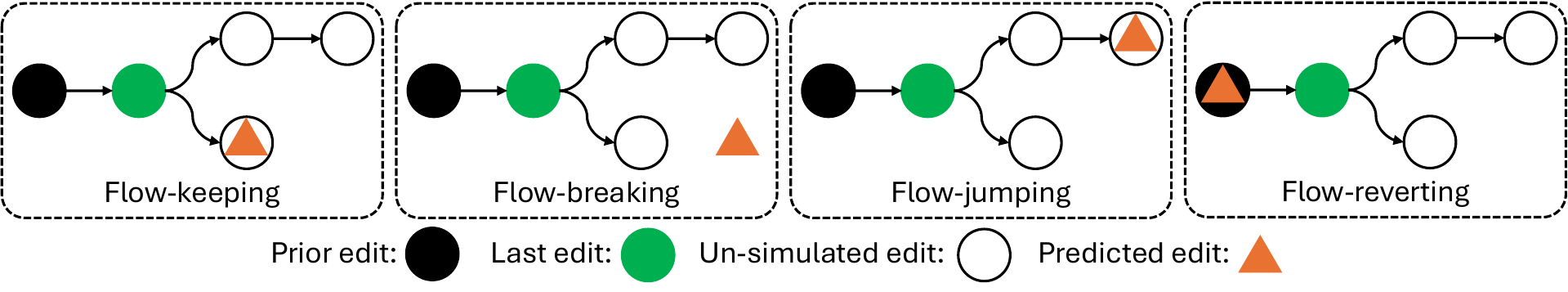}
    \vspace{-5pt}
    \caption{Flow categories of predicted edit}
    \label{fig:flow_category}
    \vspace{-5pt}
\end{figure}

\begin{lemma}[Partition Property]
\label{lem:partition}
For any predicted edit $h' \in \text{Edits}$ and fixed $H_{\text{prior}}$, $G$, exactly one of the four predicates in \autoref{def:flow-categories} holds.
\end{lemma}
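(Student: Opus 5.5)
The plan is a case analysis on two membership facts: whether $h' \in \mathbb{H}$, and whether $h' \in H_{\text{prior}}$. We use the standing assumption $H_{\text{prior}} \subseteq \mathbb{H}$ from \autoref{def:flow-categories}. These two facts split every candidate edit into three disjoint regions:
\begin{itemize}
\item $h' \notin \mathbb{H}$;
\item $h' \in H_{\text{prior}}$;
\item $h' \in \mathbb{H} \setminus H_{\text{prior}}$.
\end{itemize}
The third region is then split once more by whether $h' \in \text{Succ}(H_{\text{prior}}, G)$. The proof has two parts: every $h'$ satisfies at least one predicate (exhaustiveness), and no $h'$ satisfies two (mutual exclusivity).

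\textbf{Exhaustiveness.} If $h' \notin \mathbb{H}$, then \eqref{eq:break} holds. Otherwise $h' \in \mathbb{H}$, and either $h' \in H_{\text{prior}}$, in which case \eqref{eq:revert} holds, or $h' \in \mathbb{H} \setminus H_{\text{prior}}$. In the last case, membership in $\text{Succ}(H_{\text{prior}}, G)$ is a plain yes/no question. If yes, \eqref{eq:keep} holds; if no, \eqref{eq:jump} holds.

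\textbf{Mutual exclusivity.} We check the six pairs.
\begin{itemize}
\item \textsc{Keep} and \textsc{Jump} disagree on whether $h' \in \text{Succ}(H_{\text{prior}}, G)$.
\item \textsc{Keep} and \textsc{Jump} each require $h' \in \mathbb{H} \setminus H_{\text{prior}}$. This contradicts \textsc{Revert}, which requires $h' \in H_{\text{prior}}$.
\item The same requirement gives $h' \in \mathbb{H}$, which contradicts \textsc{Break}.
\item \textsc{Revert} and \textsc{Break} are incompatible because $H_{\text{prior}} \subseteq \mathbb{H}$: $h' \in H_{\text{prior}}$ forces $h' \in \mathbb{H}$.
\end{itemize}

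The only step needing care is the \textsc{Revert}/\textsc{Break} pair. It depends on the containment $H_{\text{prior}} \subseteq \mathbb{H}$, which is guaranteed by the digital twin, since only ground-truth hunks are ever applied. A second point is that ``$h' \in \mathbb{H}$'' must be read as equality of edit hunks in the sense of \autoref{def:edit-hunk}. I would state explicitly that a predicted edit is identified with a hunk of $\mathbb{H}$ exactly when it matches one, and otherwise counts as outside $\mathbb{H}$. Under that reading, membership is well-defined and the argument above is purely set-theoretic.

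Note also that $\text{Succ}(H_{\text{prior}}, G) \subseteq \mathbb{H} \setminus H_{\text{prior}}$ by \autoref{def:successor}. So the first conjunct of \eqref{eq:keep} is redundant, but it does no harm to the argument.
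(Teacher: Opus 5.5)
Your proof is correct and follows essentially the same route as the paper: a three-way split by membership in $\mathbb{H}$ and $H_{\text{prior}}$, a further split of $\mathbb{H} \setminus H_{\text{prior}}$ by membership in $\text{Succ}(H_{\text{prior}}, G)$, and the containment $H_{\text{prior}} \subseteq \mathbb{H}$ to rule out \textsc{Revert} and \textsc{Break} holding together. The only difference is presentational: you check exhaustiveness and the six pairwise exclusions separately, whereas the paper verifies within each case that exactly one predicate holds.
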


\begin{proof}
By case analysis on $(h' \in H_{\text{pred}})$:
\begin{itemize}
\item If $h' \notin \mathbb{H}$: $\textsc{Break}(h')$ holds by Eq.~\eqref{eq:break}, and others fail since $h' \notin \mathbb{H} \implies h' \notin H_{\text{prior}}$ and $h' \notin \mathbb{H} \setminus H_{\text{prior}}$.
\item If $h' \in H_{\text{prior}}$: $\textsc{Revert}(h')$ holds by Eq.~\eqref{eq:revert}. Since $H_{\text{prior}} \subseteq \mathbb{H}$, we have $h' \in \mathbb{H}$ but $h' \notin \mathbb{H} \setminus H_{\text{prior}}$, so \textsc{Keep} and \textsc{Jump} fail. \textsc{Break} fails since $h' \in \mathbb{H}$.
\item If $h' \in \mathbb{H} \setminus H_{\text{prior}}$: By \autoref{def:successor}, either $h' \in \text{Succ}(H_{\text{prior}}, G)$ (then $\textsc{Keep}$ holds and $\textsc{Jump}$ fails) or $h' \notin \text{Succ}( H_{\text{prior}}, G)$ (then $\textsc{Jump}$ holds and $\textsc{Keep}$ fails). $\textsc{Revert}$ fails since $h' \notin H_{\text{prior}}$, and $\textsc{Break}$ fails since $h' \in \mathbb{H}$.
\end{itemize}
The cases are exhaustive (partition $\text{Edits}$ by membership in $\mathbb{H}$ and $H_{\text{prior}}$) and mutually exclusive (each case yields exactly one true predicate).
\end{proof}

\subsubsection{Flow-Independent Metrics}
To avoid potential bias introduced by the construction of flow graphs in flow-aware metrics,
we additionally define a set of flow-independent, ground-truth-based metrics.
Given a commit with a set of ground-truth edit hunks $\mathbb{H}$, a set of recommended next code edits from a single request $\mathbb{H}'$, and a set of all recommended edits during the entire simulation process $\mathbb{H}''$,
the flow-independent metrics evaluate recommendation quality based on traditional correctness criteria:
\begin{equation}
\text{Precision} = \frac{|\mathbb{H}' \cap \mathbb{H}|}{|\mathbb{H}'|}, \quad
\text{Recall} = \frac{|\mathbb{H}'' \cap \mathbb{H}|}{|\mathbb{H}|}, \quad
\text{F}_{0.5}=\frac{(1+0.5^2)\cdot \text{Precision}\cdot \text{Recall}}{0.5^2\cdot \text{Precision}+\text{Recall}}
\end{equation}

\highlight{\subsubsection{Resource Usage Metrics}
In addition to recommendation quality, we evaluate the resource overhead introduced by code editing assistants during real-time interaction.
We consider the following resource usage metrics:
\begin{itemize}[leftmargin=*]
    \item \textbf{Token Usage.}
    The total number of tokens consumed per recommendation, including both input tokens (prompt and context) and output tokens (generated edits or responses).
    This metric serves as a hardware-agnostic proxy for computational cost.
    \item \textbf{Latency.}
    The end-to-end wall-clock time required to generate a recommendation, measured from request issuance to response delivery.
    Latency reflects system responsiveness and directly affects interaction smoothness and mental flow.
    \item \textbf{Monetary Cost.}
    The estimated cost per recommendation is computed based on total token usage and the pricing scheme of the underlying model or service.
    For locally deployed open-source models, we report equivalent token-based cost to enable fair comparison.
\end{itemize}}{(C4,C16)}

\section{Empirical Validation} \label{sec:empirical_results}

To preliminarily validate our motivation, we perform an exploratory evaluation using the evaluation framework proposed in \autoref {sec:framework}.
We randomly selected 50 annotated Python commits from our training dataset with manually constructed edit partial order graph (details may refer to \autoref{sec:rq1_benchmark}), 
with Cursor CLI \cite{cursorcli} and Claude Code \cite{claude-code} as our SUTs.
The results are shown in \autoref{tab:empirical}.
We observe that the majority of predicted edits fall into the \textsc{Break} category, accounting for 55.48\% in Cursor and 51.23\% in Claude Code. 
In contrast, \textsc{Keep} edits, which truly align with the ongoing mental flow, 
represent only 28.23\% and 34.16\% respectively. 
The remaining predictions are largely distributed across \textsc{Jump} and \textsc{Revert} cases, 
both of which also disrupt the natural progression of edits.

Overall, these findings confirm our hypothesis: 
existing edit recommendation systems struggle to maintain mental flow alignment, 
with the majority of recommendations either irrelevant or disruptive. 
This misalignment explains why strong benchmark performance does not translate into productivity gains in real development workflows.

\begin{table}[ht]
\small
\centering
\caption{Empirical study of flow-violation in existing code edit recommendation system}
\vspace{-5pt}
\label{tab:empirical}
\begin{tabular}{ccccc} 
\hline
\multirow{2}{*}{\textbf{Baseline}}   & \multicolumn{4}{c}{\textbf{Flow categories (\%)}}               \\ 
\cline{2-5} & \textbf{Keep} & \textbf{Jump} & \textbf{Revert} & \textbf{Break} \\ 
\hline
Cursor  &  28.23 & 9.84 &  6.45 & 55.48 \\
\hline
Claude Code &  34.16 &  7.82 & 6.79 & 51.23 \\
\hline
\end{tabular}
\vspace{-5pt}
\end{table}

\section{Methodology}\label{sec:methodology}

\begin{figure*}[h]
    \centering
    \includegraphics[width=0.98\linewidth]{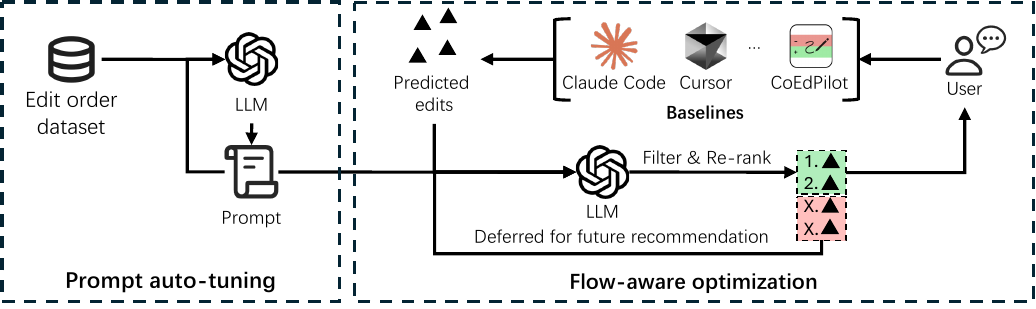}
    \caption{Overview of EditFlow}
    \label{fig:overview}
\end{figure*}

In this section, we present \keeper, our flow-alignment optimization solution for AI code editing assistants, as illustrated in \autoref{fig:overview}.
We first auto-tune an edit order recovery prompt from a human-labelled edit order dataset, 
and integrate such a prompt into the subsequent edit recommendation process, 
where we filter and re-rank edit suggestions produced by existing coding assistants based on their continuity with the mental flow reflected in prior edits.

\subsection{Edit Order Recovery} \label{sec:auto-tuning}
\highlight{To address \autoref{prob:recovery}, we propose a prompt-based approach that enables large language models to infer the pairwise cognitive order between code edits.}{(C3)}
Specifically, we aim to train and optimize task-specific prompts that guide the LLM to infer the partial order relation for any pair of edit hunks.

Since $\lambda$ is a latent cognitive property not directly observable 
from development histories, no existing training data is available.
We therefore manually construct an annotated dataset that approximates 
$\lambda$ by labelling pairwise order relations between edits within each commit.
Two authors conducted the annotation independently,
with access to the full repository and additional static analysis data, 
including dependency relations and the structural context of each edit 
(e.g., the enclosing class, function, or method call). 
The process followed two stages: 
(i) independent annotation of potential edit orders, and 
(ii) consensus resolution of disagreements through discussion.
On average, each commit required 20 minutes to annotate, resulting in a total annotation effort of 77 person-hours.

The resulting annotations define a set of pairwise relations, 
which are formalized as edges in an edit partial order graph. 
Each pair of hunks has a label from $\mathcal{L} = \{\prec, \succ, \sim, \perp\}$. 
Each hunk is represented using XML tags that encapsulate contextual information, 
with the edit itself expressed in Git diff format.
The contextual information includes the file path, the structural path of the edit, 
and any code elements that establish dependencies between the two hunks, as illustrated in \autoref{tab:edit_representation}.

\begin{table}[htbp]
\caption{Example of edit hunk representation}
\label{tab:edit_representation}
\vspace{-5pt}
\centering
\begin{lstlisting}[
    basicstyle=\ttfamily\scriptsize, 
    linewidth=0.98\columnwidth, 
    breaklines=true,
    xleftmargin=0.01\columnwidth
]
<file_path>kitty/launch.py</file_path>
<structural_path>
def launch(boss: Boss, ...)->Optional[Window]:
    boss.set_active_window(active, switch_os_window_if_needed=True)
</structural_path>
<code>
477 477        if opts.keep_focus and active:
478     -          boss.set_active_window(active, switch_os_window_if_needed=True)
    478 +          boss.set_active_window(active, switch_os_window_if_needed=True, for_keep
          _focus=True)
479 479            if opts.logo:
</code>
\end{lstlisting}
\vspace{-10pt}
\end{table}

\begin{algorithm}
\caption{Prompt Auto-Tuning for Edit Order Recovery}
\label{alg:prompt_tuning}
\begin{algorithmic}[1]
\STATE \textbf{Input:} Black-box large language model $LLM$; Training data $D_{tr} = \{(x_i,y_i)\}$, where $x_i = (h_i^0,h_i^1)$ is edit hunk pair, $y_i = \lambda(h_i^0,h_i^1) \in \{\prec, \succ, \sim, \perp\}$
\STATE \textbf{output:} Optimal prompt $p^*$
\STATE Randomly split $D_{tr}$ into batches: $B = \{b_1, \dots, b_k\}$
\STATE Obtain initial prompt candidates: $P=\{p_k|p_k := LLM(b_k)\}$
\STATE Select optimal prompt: $p^* := \arg\max_{p \in P}  \text{Accuracy}(p; D_{tr})$
\FOR{Epoch $e$ from 1 to $E$}
    \STATE $S^+=\{(x_i,y_i)|LLM(p^*, x_i)=y_i, (x_i,y_i)\in D_{tr}\}$ 
    \STATE $S^-=\{(x_i,y_i)|LLM(p^*, x_i)\neq y_i, (x_i,y_i)\in D_{tr}\}$     
    \STATE Re-split $D_{tr}$ into batches where each batch contains samples from both $S^+$ and $S^-$: $B=\{b_k|b_k\cap S^+\neq\emptyset, b_k\cap S^-\neq\emptyset\}$
    
    \FOR{each batch $b_k \in B$}
        \STATE Generate feedback: $F_k=\{f_i|f_i := LLM(p^*, b_k)\}$ 
        \STATE Integrate feedback to optimize prompt: $p_k := LLM(p^*, F_k)$ 
    \ENDFOR
    
    \STATE $p^* := \arg\max_{p \in \{p_1, \cdots, p_k\}}  \text{Accuracy}(p; D_{tr})$ 
\ENDFOR
\STATE \textbf{Return:} $p^*$
\end{algorithmic}
\end{algorithm}

Given this annotated dataset, we formalize the learning objective as identifying the optimal prompt $p^*$ that maximizes the predictive accuracy of the LLM on the pairwise annotation set:
\[
p^{*} = \arg\max_{p \in P} \text{Accuracy}(p; D_{tr}),
\]
where $P$ is the set of candidate prompts and $D_{tr}$ is the manually annotated training set.
As shown in \autoref{alg:prompt_tuning}, 
we split the dataset into $k$ batches and feed each complete batch to the LLM with only the response JSON schema, 
requiring the LLM to summarize an initial prompt for the edit order recovery task based on the input-output pairs $(x_i, y_i)$.
The algorithm iteratively refines prompts through feedback-driven optimization. 
In each epoch, we partition training data into correct predictions ($S^+$) and failed cases ($S^-$) 
using the current optimal prompt $p^*$. 
We then re-batch the dataset, 
ensuring each batch contains both successful and failed examples, 
providing contrastive learning signals.
For each batch $b_k$, 
we generate feedbacks $F_k$ by analyzing the prompt's performance, 
capturing why certain predictions succeeded while others failed. 
The LLM integrates this feedback to optimize prompt $p_k$, 
learning to distinguish between different edit order relationships. 
Finally, we evaluate all candidate prompts on the entire training set 
and select the one with the highest accuracy as the new optimal prompt $p^*$.
For the detailed learned prompt, please refer to our anonymous website \cite{homepage}.

\subsection{Flow-Aware Optimization}\label{sec:optimization}
\highlight{To optimize the flow alignment of edit recommendations (\autoref{prob:optimization}),
we implement \keeper as a wrapper that encapsulates existing sequential edit recommendation systems,
capable of improving the precision of edit suggestions by filtering out non-flow-keeping suggestions
and re-ranking them based on their relevance to maintain mental flow continuity.}{(C3)}
Given a code project $\mathbf{P}$, 
the sequence of prior edits $H_{\text{prior}} = \langle h_1, \ldots, h_n \rangle$, 
and the edit description $d$, 
\keeper queries the edit recommendation system to obtain a set of recommended edits, 
denoted as $H_{\text{pred}} = \{ h'_1, \ldots, h'_k \}$. 
\keeper then infers the pairwise order label between each predicted edit $h'_i$ and prior edit $h_n$ via $\text{LLM}(h_n, h'_i, p)$, 
where $p$ represents the auto-tuned edit order recovery prompt (as in \autoref{sec:auto-tuning}). 
The LLM outputs the inferred edit order label along with the log probability of each predicted token. 
A prediction of $\prec$ or $\sim$ indicates a potential edit order from the user's last edit $h_n$ to the predicted edit $h'_i$ ($h_n\rightarrow h'_i$ or $h_n\leftrightarrow h'_i$), 
whereas $\succ$ denotes an edit order $h'_i\rightarrow h_n$, which reverses the user's current mental flow, 
and $\perp$ suggests minimal relation between $h_n$ and $h'_i$.
Only edits predicted as either $\prec$ or $\sim$ are retained for subsequent processing.
\keeper re-ranks the remaining suggested edits according to the average log probability of the edit order label tokens predicted by $\text{LLM}(h_n, h'_i, p)$. 
A higher score indicates greater confidence in the predicted order label.
For edits that are predicted as $\succ$ or $\perp$, \keeper does not discard them permanently. Instead, these edits are deferred and may be reconsidered in future iterations when the user’s editing context evolves. As the sequence of prior edits grows, previously deferred edits can become flow-aligned and thus eligible for recommendation at a more appropriate time. This design ensures that \keeper prioritizes maintaining immediate mental flow continuity while preserving potentially useful edits for later recommendation.

We also implement \keeper as a VS Code extension, 
which wraps Cursor CLI, Claude Code and CoEdPilot as backends, 
filters and re-ranks their suggested edits before presenting them to users.
All edits are displayed in descending order of their average log probability, 
with each entry clickable to navigate to the corresponding file and view the associated edit diff.
For a demonstration of the interaction, please refer to our anonymous website \cite{homepage}.

\section{Experiment}\label{sec:experiment}
We evaluate our edit order recovery methods and various subsequent edit recommendation systems via the following research questions:

\begin{itemize}[leftmargin=*]
    \item \textbf{RQ1 (Effectiveness on annotated dataset, \autoref{sec:rq1})}: Can the learned partial order prompt perform well on our annotated dataset?
    \item \textbf{RQ2 (Alignment with practical data, \autoref{sec:rq2})}: How well does the partial order graph inferred by the learned prompt align with real-world editing data?
    \item \textbf{RQ3 (Performance of flow-aware optimization, \autoref{sec:rq3})}: Can our learned partial order prompt improve the suggestion precision of existing subsequent edit recommendation systems?
    \item \textbf{RQ4 (User study, \autoref{sec:rq4})}: Can our learned partial order prompt improve the real-world development productivity?
\end{itemize}

\subsection{RQ1 (Effectiveness on Annotated Dataset)} \label{sec:rq1}

\subsubsection{Setup} \label{sec:rq1_setup}
Prompt auto-tuning was performed and tested with \texttt{Claude-Sonnet-4-20250514}, using a maximum output length of 4096 tokens and a temperature of 0.7.
The auto-tuning underwent 5 epochs and a batch size of 32. 

\subsubsection{Benchmark} \label{sec:rq1_benchmark}
The annotated dataset consists of 100 commits from the 45 most-starred open-source GitHub Python repositories, comprising a total of 772 edit hunks and 1,747 directed edges.
From these, we constructed 2,030 training samples and 871 test samples.
To avoid intra-commit data leakage, we split the dataset at the commit level in a 7:3 ratio, ensuring that all samples from the same commit are assigned to the same split.
\needspace{3\baselineskip}
\subsubsection{Baseline} \label{sec:rq1_baseline}
We compare our auto-tuned prompt with 4 other baselines:
\begin{itemize}[leftmargin=*]
    \item \textbf{Zero-shot learning}: The LLM performs the task with only a single-sentence task description and the expected response schema, without any example or task-specific prompt tuning;
    \item \textbf{Few-shot learning}: The LLM is provided with 8 in-context examples, each consisting of a pair of edit hunks along with their edit order label, illustrating the edit order recovery task; the examples include four instances of each of the 2 possible labels.
    \item \textbf{Hand-crafted prompt}: A prompt manually designed based on the annotators' experience, summarizing key instructions for the edit order recovery task.
    \item \highlight{\textbf{DSPy} \cite{khattab2024dspy}: DSPy is a declarative framework for automated prompt optimization.
    We design a multi-step DSPy module that sequentially extracts edit information, analyzes dependency relations, and infers the final edit order via structured chain-of-thought reasoning.
    MIPROv2 optimizer \cite{opsahl2024optimizing} is adopted to refine the prompt based on human-labelled training data.}{(C9)}
\end{itemize}

\subsubsection{Metric}
We evaluated the quality of the generated prompts using accuracy, weighted precision, and F1 score (recall is equivalent to accuracy in this multi-class setting).

\subsubsection{Result}
\highlight{\autoref{tab:rq1} reports the performance of different prompting strategies on our annotated dataset.
Overall, the results show that the auto-tuned prompt substantially outperforms all baselines across accuracy, precision, and F1-score.
In particular, the auto-tuned prompt achieves an accuracy of 87.26\%, representing a 63.81\% relative improvement over the best-performing baseline (53.39\%).
This demonstrates the effectiveness of automated prompt optimization in capturing the underlying principles of edit ordering, surpassing other baseline strategies.
Notably, despite the common adoption of DSPy as a general-purpose prompt optimization framework,
its performance falls behind our auto-tuned approach, suggesting that generic example-based optimization is insufficient for capturing the fine-grained and rule-intensive nature of edit-order prediction.}{(C9)}

\begin{table}
\centering
\caption{The performance of the auto-tuned prompt on annotated dataset}
\label{tab:rq1}
\vspace{-5pt}
\begin{tabular}{lccc} 
\hline
\textbf{Method} & \textbf{Accuracy (\%)} & \textbf{Precision (\%)} & \textbf{F1-score (\%)}  \\ 
\hline
Zero-shot learning                                   & 50.63                  & 82.22                   & 61.67                   \\
Few-shot learning                                    & 47.42                  & 77.13                   & 57.96                   \\
Hand-crafted prompt                                  & 53.27                  & 80.96                   & 60.93                   \\
DSPy                                                 & 53.39                  & 62.44                   & 57.37\\
Auto-tuned prompt                                    & \textbf{87.26}         & \textbf{88.01}          & \textbf{87.54}         \\
\hline
\end{tabular}
\vspace{-5pt}
\end{table}

\subsection{RQ2 (Alignment with Practical Data)} \label{sec:rq2}

\subsubsection{Setup and Baseline}
This experiment adopted the same setup and baseline as depicted in \autoref{sec:rq1_setup} and \autoref{sec:rq1_baseline}.

\subsubsection{Benchmark}
To evaluate the quality of the auto-tuned prompt in the real world, we collected an edit order dataset from our industry collaborator, 
an international information technology company with over 60K employees. 
This real-world dataset is collected inside our collaborator, generated by its employees\footnote{The data collection obtained employee consent and underwent anonymization. 
All data is stored in the collaborator's internal systems throughout the process, industry-collected
and access to and analysis of the data during the research were conducted in a secure environment authorized by the collaborator. 
This dataset has not been disseminated outside the research team.}
This industry collected dataset consists of 500 commits from Jun. 2025 to Aug. 2025, containing 3,059 edit hunks.


We input any two edit hunks (e.g., $h_i, h_j$) within the same commit into the LLM, 
under the instruction of the auto-tuned prompt,
to infer their partial order.
Thereby, we construct a predicted partial order graph $G=(\mathbb{H},\mathcal{E})$, where $\mathbb{H}$ is the set of all edit hunks from the given commit and $\mathcal{E}$ is the set of inferred partial order edges. 

\subsubsection{Metric} 
We then evaluate the reliability of the predicted partial order graph, 
by measuring the number of forbidden partial order relationships inferred from the predicted partial order graph $G$,
that are contradicted by the ground-truth edit order sequence $S$.
The choice of a violation-based metric is motivated by the inherent ambiguity in evaluating partial order predictions against sequential ground truth. 
There may exist multiple natural and flow-keeping editing sequences 
for a commit with $n$ edit hunks.
However, practical data collection constraints limit us to observing a single editing sequence per commit, as it is infeasible to have developers repeatedly perform identical editing tasks to enumerate all valid orderings.
This asymmetry between the prediction space (partial orders) and 
the observation space (single gold sequence) renders traditional accuracy-based metrics problematic. 
Specifically, for any predicted partial order relationship that does not appear in the observed sequence, we cannot definitively classify it as incorrect,
since it may represent a valid alternative ordering that was simply not observed in our particular ground-truth instance.
To address this challenge, we adopt a violation-based approach that focuses on demonstrably incorrect predictions. 
While this approach is also affected by the single-gold-sequence limitation 
and provides an underestimation, 
it offers substantially higher reliability compared to accuracy-based metrics. 
The key advantage is that every detected violation represents a definitive error,
a constraint that is provably inconsistent with observed developer behaviour. 
The detailed algorithm to infer forbidden partial orders is available on our homepage \cite{homepage}.

\subsubsection{Result}

\begin{table}
\centering
\caption{Number of violations of the auto-tuned prompt and baselines on the real-world dataset}
\label{tab:rq2}
\vspace{-5pt}
\begin{tabular}{lc} 
\hline
\textbf{Method} & \textbf{\#Violation} \\ 
\hline
Zero-shot learning    & 195 \\
Few-shot learning     & 203 \\
Hand-crafted prompt   & 121 \\
Auto-tuned prompt     & \textbf{30} \\
\hline
\end{tabular}
\vspace{-5pt}
\end{table}

As shown in \autoref{tab:rq2}, 
the auto-tuned prompt significantly outperforms all baselines in terms of reducing violations. 
Specifically, while zero-shot and few-shot learning lead to 195 and 203 violations respectively, 
and the hand-crafted prompt reduces this number to 121, 
the auto-tuned prompt achieves only 30 violations.
This represents a reduction of over 75\% compared to the best-performing baseline, 
demonstrating the superior alignment of the auto-tuned prompt with real-world editing behaviors. 
These results confirm that prompt auto-tuning substantially enhances the reliability of partial order inference in practical development settings.

\subsection{RQ3 (Performance of Flow-aware Optimization)} \label{sec:rq3} 
We evaluate the subsequent edit recommendation performance of different baselines with and without our proposed flow-aware optimization technique.

\subsubsection{Experiment Design} \label{sec:rq2_experiment_desgin}
We adopt the digital twin for this experiment, with details described in \autoref{sec:evaluation_pipeline}.
For any git commit containing $n$ edit hunks $\mathbb{H}$, 
we construct an edit partial order graph $G$ using an auto-tuned prompt $p$.
We assume the inferred partial order graph as the ground-truth graph. 
\highlight{Notably, these graphs are not involved in \keeper's filtering process.
They are used to drive Digital Twin simulations by traversing the graph, ensuring that the simulated editing progress remains aligned with developers' mental flow, and to support the computation of flow-aware metrics.}{(C7)}

\subsubsection{Benchmark}
We adopt 2 benchmarks for this research question.
\begin{itemize}[leftmargin=*]
    \item \textbf{Large scale benchmark.} The simulation benchmark consists of 500 commits from 80 most-starred open-source GitHub Python repositories, and contains 3,584 hunks in total. For commits of this benchmark, we adopt the auto-tuned prompt and \texttt{Claude-Sonnet-4-20250514} to derive the edit partial order graph. 
    Commit selection criteria include: 
    (1) containing 5-10 edit hunks across at least 2 source files; 
    (2) involving real user authorship rather than automated systems;
    (3) excluding merge commits and filename changes; 
    and 
    (4) maintaining ASCII-only content with meaningful code modifications.
    \highlight{\item \textbf{Human annotated benchmark.} To avoid the concern that improvements on flow-aware metrics are a self-fulfilling consequence of EditFlow-generated order signals, we include our human-annotated benchmark with manually labeled partial orders.
    This benchmark consists of 25 Python commits, and more details may refer to \autoref{sec:rq1_benchmark}.}{(C5,C11,C18)}
\end{itemize}

\subsubsection{Baseline} \label{sec:rq3_baseline}
We selected Claude Code (Version 1.0.113), Cursor CLI (Version 2025.09.18-7ae6800), and CoEdPilot as our baseline.
For Cursor CLI and Claude Code, we relied on their default underlying models without manually specifying a particular model.
Each system requires different integration approaches with our simulation framework:
\begin{itemize}[leftmargin=*]
    \item \textbf{Claude Code and Cursor CLI:} Both the Anthropic Claude Code SDK \cite{claudecodesdk} and Cursor CLI \cite{cursorcli} support headless mode to avoid GUI interaction. 
    We supply the project path and structured requests with the same format, containing editing history and task specifications.
    The digital twin grants both Claude and Cursor agents comprehensive access to read, write files, and execute bash commands for project analysis.
    \item \textbf{CoEdPilot:} We employ CoEdPilot's discriminator model to select editing files, 
    then incorporate prior edits and edit descriptions into the locator and generator model inputs following CoEdPilot's specific formatting requirements.
\end{itemize}

For baselines without flow-aware optimization, we refer to this configuration as \textbf{Original}, where recommendations returned from baselines are directly evaluated.
For baselines with flow-aware optimization, we adopt our proposed \keeper to filter and re-rank returned suggestions before evaluation, we refer to this configuration as \textbf{w/ \keeper}.

\subsubsection{Metric} 
We evaluate flow-aware optimization via three categories of metrics, including flow alignment, recommendation correctness, and resource usage.
Details may refer to \autoref{subsec:metrics}.

\subsubsection{Result}
\highlight{For flow-aware metrics, as shown in \autoref{tab:rq3}, \editflow brings an average boost of 87.42\% to flow-keeping edits and an average drop of 22.42\% to flow-breaking, on the large-scale benchmark.
This improvement is not due to the LLM favouring its own derived editing order.
As shown in \autoref{tab:rq3_2}, on the human-annotated edit ordering benchmark,
\editflow achieves an average 106.58\% boost in flow-keeping edits and an average 19.15\% reduction in flow-breaking edits,
exhibiting a trend consistent with the large-scale benchmark results.}{(C5,C11,C18)}

\highlight{We further evaluate \editflow using precision, recall, and F0.5 as flow-independent metrics to provide a more objective assessment.
Across the two benchmarks, \editflow achieves an average 66.99\% improvement in precision, accompanied by a 7.09\% decrease in recall and an overall 46.64\% increase in F0.5, indicating a favourable trade-off that prioritizes recommendation correctness while maintaining stable coverage.}{(C5,C11,C18)}

\highlight{To assess the practical applicability of our approach, we further analyze its resource usage.
On average across three SUTs and two benchmarks, \editflow introduces an additional 1.71 seconds of latency, 6.58k tokens, and \$0.03 in cost per query. 
Both Claude and Cursor incur substantial latency from their agent-based workflows, where file inspection and tool invocation already dominate the end-to-end execution time. In comparison, the additional overhead introduced by \editflow accounts for only a small fraction of the overall latency.
Separately, prior studies suggest that short delays below approximately two seconds are generally not perceived as disruptive and are unlikely to break users' mental flow \cite{maslych2025mitigating}.
Moreover, these costs can be further reduced through engineering optimizations, such as adopting recent inference acceleration techniques \cite{cai2024medusa, wen2024speculative, butler2024pipeinfer},
leveraging more cost-efficient models as employed by Cursor.}{(C4,C16)}

In summary, the results demonstrate that flow-aware optimization leads to more flow-preserving recommendations, reducing disruptions while maintaining or improving overall effectiveness.

\begin{table}
\small
\centering
\caption{Performance of flow-aware optimization on large-scale benchmark}
\label{tab:rq3}
\vspace{-5pt}
\resizebox{\columnwidth}{!}{%
\begin{tabular}{clccccccccccc} 
\hline
\multirow{3}{*}{\textbf{Baseline}}     
& \multicolumn{1}{c}{\multirow{3}{*}{\textbf{Config}}} 
& \multicolumn{4}{c}{\textbf{Flow categories (\%)}}              
& \multirow{3}{*}{\begin{tabular}[c]{@{}c@{}}\textbf{Prec.}\\\textbf{(\%)}\end{tabular}}
& \multirow{3}{*}{\begin{tabular}[c]{@{}c@{}}\textbf{Rec.}\\\textbf{(\%)}\end{tabular}} 
& \multirow{3}{*}{\begin{tabular}[c]{@{}c@{}}\textbf{$\text{F}_{0.5}$}\\\textbf{(\%)}\end{tabular}}  
& \multicolumn{3}{c}{\textbf{Resource usage per query}} \\
\cline{3-6}\cline{10-12}
&  
& \textbf{Keep} & \textbf{Jump} & \textbf{Revert} & \textbf{Break} 
&  &  &  
& \begin{tabular}[c]{@{}c@{}}\textbf{Latency}\\\textbf{(s)}\end{tabular}
& \begin{tabular}[c]{@{}c@{}}\textbf{Tokens}\\\textbf{(K)}\end{tabular}
& \begin{tabular}[c]{@{}c@{}}\textbf{Cost}\\\textbf{(\$)}\end{tabular} \\  
\hline
\multirow{2}{*}{Cursor}  
& Original  
& 24.00 & 9.02 & 6.83 & 60.15 
& 33.02 & \thl{40.01} & 34.22
& 65.09 & 174.47 & 0.0621 \\
& w/ \keeper 
& \thl{38.49} & \thl{3.93} & \thl{6.45} & \thl{51.13} 
& \thl{42.42} & 37.97 & \thl{41.45}
& 67.12 & 180.68 & 0.0869 \\ 
\hline
\multirow{2}{*}{\begin{tabular}[c]{@{}c@{}}Claude \\Code\end{tabular}} 
& Original  
& 30.76 & 9.78 & 5.97 & 53.49 
& 40.54 & \thl{39.46} & 40.32
& 45.46 & 112.82 & 0.3609 \\
& w/ \keeper 
& \thl{46.89} & \thl{3.56} & \thl{4.03} & \thl{45.52} 
& \thl{50.45} & 36.15 & \thl{46.75}
& 46.70 & 117.15 & 0.3783 \\
\hline
\multirow{2}{*}{CoEdPilot}
& Original  
& 13.30 & \thl{1.48} & 2.31 & 82.91 
& 14.78 & \thl{28.08} & 16.33
& 6.49 & 43.59 & 0.0000 \\
& w/ \keeper 
& \thl{33.18} & 2.32 & \thl{1.93} & \thl{62.57} 
& \thl{35.50} & 25.68 & \thl{32.98}
& 8.38 & 54.80 & 0.0445 \\
\hline
\end{tabular}}
\vspace{-5pt}
\end{table}

\begin{table}
\small
\centering
\caption{Performance of flow-aware optimization on human-labelled benchmark}
\label{tab:rq3_2}
\vspace{-5pt}
\resizebox{\columnwidth}{!}{%
\begin{tabular}{clccccccccccc} 
\hline
\multirow{3}{*}{\textbf{Baseline}}     
& \multicolumn{1}{c}{\multirow{3}{*}{\textbf{Config}}} 
& \multicolumn{4}{c}{\textbf{Flow categories (\%)}}              
& \multirow{3}{*}{\begin{tabular}[c]{@{}c@{}}\textbf{Prec.}\\\textbf{(\%)}\end{tabular}}
& \multirow{3}{*}{\begin{tabular}[c]{@{}c@{}}\textbf{Rec.}\\\textbf{(\%)}\end{tabular}} 
& \multirow{3}{*}{\begin{tabular}[c]{@{}c@{}}\textbf{$\text{F}_{0.5}$}\\\textbf{(\%)}\end{tabular}}  
& \multicolumn{3}{c}{\textbf{Resource usage per query}} \\
\cline{3-6}\cline{10-12}
&  
& \textbf{Keep} & \textbf{Jump} & \textbf{Revert} & \textbf{Break} 
&  &  &  
& \begin{tabular}[c]{@{}c@{}}\textbf{Latency}\\\textbf{(s)}\end{tabular}
& \begin{tabular}[c]{@{}c@{}}\textbf{Tokens}\\\textbf{(K)}\end{tabular}
& \begin{tabular}[c]{@{}c@{}}\textbf{Cost}\\\textbf{(\$)}\end{tabular} \\  
\hline
\multirow{2}{*}{Cursor}  
& Original  
& 29.17 & 14.88 & 3.87 & 52.08
& 44.05 & \thl{45.95} & 44.41
& 51.29 & 195.81 & 0.0660 \\
& w/ \keeper 
& \thl{47.96} & \thl{05.58} & \thl{3.35} & \thl{43.12} 
& \thl{53.53} & 45.41 & \thl{51.68}
& 52.94 & 202.77 & 0.0926 \\ 
\hline
\multirow{2}{*}{\begin{tabular}[c]{@{}c@{}}Claude \\Code\end{tabular}} 
& Original  
& 29.55 & 10.12 & 6.88 & 53.44 
& 39.68 & \thl{38.92} & 39.52
& 50.53 & 114.62 & 0.3661 \\
& w/ \keeper 
& \thl{46.88} & \thl{2.08} & \thl{4.17} & \thl{46.88} 
& \thl{48.96} & 36.76 & \thl{45.91}
& 52.19 & 119.63 & 0.3853 \\
\hline
\multirow{2}{*}{CoEdPilot}           
& Original  
& 8.16 & \thl{1.84} & 3.95 & 86.05
& 10.00 & \thl{15.68} &  10.78
& 6.83 & 45.29 & 0.0000 \\
& w/ \keeper 
& \thl{23.61} & 2.78 & \thl{3.47} & \thl{70.14} 
& \thl{26.39} & 14.05 & \thl{22.45}
& 8.63 & 51.04 & 0.0223 \\
\hline
\end{tabular}}
\vspace{-5pt}
\end{table}

\subsection{RQ4 (User Study)} \label{sec:rq4} 
To validate \keeper performance in real-world usage, we design this user study with our implemented VS Code extension.
Further details, including demographic details and instrumentation-based analysis of user interactions, are available on our homepage \cite{homepage}.

\subsubsection{System Under Test}
We evaluate the effectiveness of our \keeper 
by applying it to two representative code editing systems, 
CoEdPilot and Claude Code. 
For each system, we compare two variants: (1) the original system without \keeper, and
(2) the optimized system with \keeper integrated. 
This setup allows us to isolate the contribution of \keeper by controlling for differences across systems and directly measuring its impact on editing recommendation quality in terms of flow-awareness.

\subsubsection{Participant} 
We recruited 32 participants from 2 universities.
All participants are required to complete a pre-study questionnaire to collect their background information, 
including educational level, programming proficiency, and prior experience with AI-assisted programming tools.
The participants were aged between 20 and 30 and enrolled in computer science programs, ranging from undergraduate to PhD level. 
They demonstrated substantial proficiency in Python programming, engaging in coding activities on average 4.5 days per week. 
Additionally, 90\% of participants reported prior experience with AI-assisted programming tools.

Participants using the original Claude Code constitute Control Group 1 (CG1), 
while those using Claude Code integrated with \keeper form Experiment Group 1 (EG1). 
Similarly, participants using the original \coedpilot comprise Control Group 2 (CG2),
and those using \coedpilot with \keeper constitute Experiment Group 2 (EG2).

\subsubsection{Setup}
Participants first complete a warm-up task that follows the same procedure as the three formal tasks. 
The warm-up has no time limit and is intended to familiarize them with the extension and study workflow.
For each task, we provide:
\begin{enumerate}[leftmargin=*]
    \item The complete repository;
    \item Detailed edit descriptions, with the necessary background knowledge, including code functionality and method interface;
    \item An initial edit that serves as a starting point, guiding participants toward completing the remaining edits;
    \item Test cases that participants can execute to verify whether their edits meet the expected outcomes. 
    A task is considered complete once all tests pass.
\end{enumerate}
At the end of the study, 
participants are required to share screen recordings throughout their completion of the three tasks, 
along with the instrumentation data collected during their interaction.

\subsubsection{Editing Task}
Participants are required to complete three real-world editing tasks derived from commits in top-starred GitHub repositories.
\highlight{The three commits are intentionally selected to span distinct edit-topology structures observed in real-world multi-location edits, allowing us to examine both the effectiveness and the boundaries of flow-aware optimization.}{(C12)}
Tasks are simplified by providing project-specific knowledge and removing distracting edits,
so that participants can complete them without requiring specialized background knowledge.
Each task is limited to 30 minutes.

\begin{itemize}[leftmargin=*]
    \item \textbf{Task 1 (Moderately easy)}: Originating from \cite{task1}, this commit adds a new argument \texttt{allow\_scrap\ ing} to \texttt{fetch\_url()}, controlling whether the system may retrieve source code from an external URL for bug localization upon a Sentry-detected error.
    In the call chain of \texttt{fetch\_url()}, there are syntax-based modifications across three files with a total of 8 changes. We select the implementation involving the concrete use of the \texttt{allow\_scraping} parameter as the initial edit, from which the remaining edits can be inferred by users.
    \item \textbf{Task 2 (Hard)}: Originating from \cite{task2}, this commit ensures that when \texttt{--keep-focus} is used, the tab and window history remain consistent, preserving the correct focus transition behaviour by removing non-user-initiated entries from the history stack. This update involves 8 changes across 4 files, with the update of calling function \texttt{boss.set\_active\_window()} as the initial edit.
    \item \textbf{Task 3 (Easy)}: Based on \cite{task3}, this commit focuses on refactoring a commonly used \texttt{if} condition into a standalone function. The change spans 14 modifications across 2 files. We treat the extracted function as the initial edit, and the user’s task is to locate the corresponding \texttt{if} condition instances and replace them with this function.
\end{itemize}

\subsubsection{Metric}
We assess the user study along both quantitative and qualitative dimensions. 
Quantitatively, we measure the \textbf{average time cost} for user efficiency, 
compute \textbf{Mann-Whitney U test p-values} using permutation testing (10,000 resamples) to determine statistical significance between independent groups ($p$ < 0.05 indicates significance), 
and calculate \textbf{effect sizes} ($r$) derived from the standardized U statistic to quantify the magnitude of differences ($r \geq$ 0.5 indicates a large effect). 
Qualitatively, we randomly sample screen recordings from each group and examine participants' editing behaviours in conjunction with instrumentation data.

\begin{table}
\centering
\caption{Performance of Claude Code with \keeper (EG1), \coedpilot with \keeper (EG2), Claude Code (CG1), and \coedpilot (CG2), time cost in minutes}
\label{tab:rq4}
\vspace{-5pt}
\resizebox{\textwidth}{!}{
\begin{tabular}{cccc|cccc|cccc|cccc} 
\hline
\textbf{EG1}                      & \textbf{T1}   & \textbf{T2}   & \textbf{T3}   & \textbf{CG1}                      & \textbf{T1}   & \textbf{T2}    & \textbf{T3}   & \textbf{EG2}                      & \textbf{T1}   & \textbf{T2}   & \textbf{T3}   & \textbf{CG2}                      & \textbf{T1}   & \textbf{T2}    & \textbf{T3}    \\ 
\hline
\textbf{P1}                       & 6.42          & 8.29          & 3.82          & \textbf{P9}                       & 7.67          & 11.13          & 3.05          & \textbf{P17}                      & 6.83          & 7.61          & 3.59          & \textbf{P25}                      & 12.76         & 14.57          & 4.84           \\
\textbf{P2}                       & 8.05          & 8.93          & 5.33          & \textbf{P10}                      & 6.99          & 10.32          & 4.59          & \textbf{P18}                      & 8.43          & 6.10          & 4.03          & \textbf{P26}                      & 9.15          & 12.27          & 4.57           \\
\textbf{P3}                       & 6.59          & 6.45          & 3.36          & \textbf{P11}                      & 7.07          & 15.79          & 4.93          & \textbf{P19}                      & 7.48          & 5.86          & 5.29          & \textbf{P27}                      & 8.22          & 11.56          & 4.73           \\
\textbf{P4}                       & 5.45          & 6.33          & 3.89          & \textbf{P12}                      & 8.69          & 12.77          & 3.98          & \textbf{P20}                      & 6.42          & 6.45          & 4.97          & \textbf{P28}                      & 7.28          & 15.88          & 3.64           \\
\textbf{P5}                       & 7.05          & 6.78          & 3.23          & \textbf{P13}                      & 8.62          & 15.78          & 5.66          & \textbf{P21}                      & 4.32          & 5.91          & 3.31          & \textbf{P29}                      & 8.84          & 12.31          & 5.57           \\
\textbf{P6}                       & 8.07          & 10.94         & 3.40          & \textbf{P14}                      & 7.49          & 12.33          & 3.79          & \textbf{P22}                      & 5.66          & 8.03          & 2.10          & \textbf{P30}                      & 6.92          & 11.12          & 4.33           \\
\textbf{P7}                       & 5.88          & 8.20          & 3.15          & \textbf{P15}                      & 7.93          & 9.37           & 4.47          & \textbf{P23}                      & 7.24          & 6.98          & 3.64          & \textbf{P31}                      & 8.81          & 14.58          & 4.93           \\
\textbf{P8}                       & 7.51          & 8.10          & 2.08          & \textbf{P16}                      & 5.64          & 11.77          & 2.62          & \textbf{P24}                      & 9.62          & 5.58          & 3.66          & \textbf{P32}                      & 17.00         & 12.39          & 2.64           \\ 
\hline
\multicolumn{1}{l}{\textbf{Avg.}} & \textbf{6.88} & \textbf{8.00} & \textbf{3.53} & \multicolumn{1}{l}{\textbf{Avg.}} & \textbf{7.51} & \textbf{12.41} & \textbf{4.14} & \multicolumn{1}{l}{\textbf{Avg.}} & \textbf{7.00} & \textbf{6.57} & \textbf{3.82} & \multicolumn{1}{l}{\textbf{Avg.}} & \textbf{9.87} & \textbf{13.09} & \textbf{4.41}  \\
\hline
\end{tabular}
}
\vspace{-5pt}
\end{table}

\subsubsection{Results and Analysis}
The performance of all participants is shown in \autoref{tab:rq4}, with the following analysis:

\begin{itemize}[leftmargin=*]
    \item \textbf{Task 1:} EG1 marginally outperforms CG1 without statistical significance ($p = 0.1966, r = 0.289$), whereas EG2 outperforms CG2 with statistical significance ($p = 0.0318, r = 0.525$)
    \item \textbf{Task 2:} The superiority of EG1 over CG1 is statistically significant ($p = 0.0004, r = 0.788$), and a consistent trend is also observed for EG2 compared with CG2 ($p = 0.0004, r = 0.840$).
    \item \textbf{Task 3:} EG1 marginally outperforms CG1, though the difference is not statistically significant ($p = 0.2186, r = 0.289$). A similar trend is observed when comparing EG2 with CG2 ($p = 0.2360, r = 0.289$).
\end{itemize}

\textbf{Why is \keeper’s advantage over the original system not evident in Task 1?}
Task 1 is of low-to-moderate difficulty, where most edits follow clear syntactic propagation patterns,
These patterns constrain the search space for plausible edits and make it easier for users to maintain a clear mental flow.
As a result, strong models such as Claude already produce fewer flow-breaking recommendations, leaving limited room for further improvement.
In contrast, weaker models like CoEdPilot are more prone to incorrect or mistimed suggestions, allowing \keeper to deliver statistically significant gains.
\highlight{These results indicate that \keeper is effective at narrowing the performance gap between large and small models in edit recommendation tasks.}{(C12)}

\textbf{Why does \keeper outperform both original systems in Task 2?} 
Task 2 is substantially more challenging,
requiring a deeper understanding of the codebase and permitting only a small set of valid edit sequences.
Moreover, it involves several functions with similar names: 
\texttt{set\_active\_tab()} (ground truth), \texttt{\_set\_active\_tab()} (distraction), and \texttt{set\_active\_tab\_idx()} (distraction), 
increases the likelihood of confusion and flow-violating recommendations from both Claude and CoEdPilot.
\highlight{In this setting, \keeper provides consistent and statistically significant improvements for both systems.}{(C12)}
Without \keeper, the original systems tend to produce flow-violating suggestions, imposing a high verification burden on users.
For example, user P31 had 13 locations where each verification took more than 10 seconds, 
including 4 locations that each took more than 20 seconds and 2 that exceeded 30 seconds.
If the user does not strictly verify the recommendations, the incorrect edits are usually only discovered during test execution, requiring substantial time for debugging (approximately 60\% of the total editing time for P13).

\textbf{Why does \keeper\ not demonstrate a statistically significant advantage in Task 3?}
Task 3 is a uniform refactoring task involving 14 highly similar edits, where an \textit{if} condition is consistently replaced by a call to a refactored function,
which introduce no functionality or semantic side effects that affect the surrounding code.
\highlight{By construction, this scenario induces an almost fully connected partial-order graph in which edits are largely interchangeable, making the task inherently order-insensitive.
As a result, nearly any edit ordering aligns with developers' mental flow, and both Claude and CoEdPilot already generate correct recommendations with few flow violations.
This task, therefore, serves as a boundary case where flow-aware optimization is not expected to yield substantial benefits, consistent with our observed results.}{(C12)}

\section{Failure Analysis and Metric Implications}
\label{sec:failure_analysis}

\highlight{While \editflow consistently improves multiple flow-aware and flow-agnostic metrics,
we observe several recurring scenarios in which \editflow itself fails to make correct flow-aware decisions.
In this section, we focus specifically on analyzing \editflow's failure modes.
For each failure category, we qualitatively explain the underlying mechanism in \editflow's design or assumptions that lead to incorrect decisions,
illustrate representative examples,
and discuss how such failures propagate to downstream evaluation metrics.}{(C2,C19)}

\subsection{Failure Mode I: False Rejection due to $k$-Context Sensitivity}
\highlight{\textbf{Mechanism.}
In this scenario, \keeper incorrectly filters out flow-aligned subsequent edit recommendations.
This failure arises when \keeper evaluates flow continuity using only the most recent edit, effectively assuming a 1-step editing context.
However, in practice, a valid subsequent edit may exhibit flow continuity with earlier edits rather than with the most recent edit alone.
When such broader contextual information lies outside the limited context window considered by \keeper, 
the candidate edit appears locally misaligned and is therefore incorrectly rejected, despite being consistent with the developer’s overall editing flow.
\definecolor{codegreen}{rgb}{0.0, 0.66, 0.30}
\definecolor{codered}{rgb}{1,0.05,0.05}
\definecolor{codegray}{rgb}{0.5,0.5,0.5}
\definecolor{codegray}{rgb}{0.5,0.5,0.5}
\definecolor{codeblue}{rgb}{0.0,0.0,0.5}
\definecolor{codered}{rgb}{0.6,0.0,0.0}
\definecolor{codepurple}{rgb}{0.58,0.0,0.82}
\definecolor{codeorange}{rgb}{1.0,0.5,0.0}
\definecolor{codecyan}{rgb}{0.0,0.6,0.6} 

\lstdefinestyle{lgeometry}{ xleftmargin = 20pt, xrightmargin = 0pt, frame = tb, framesep = \fboxsep, framexleftmargin = 20pt}

\lstdefinestyle{gostyle}{
    language=python,
    basicstyle=\ttfamily\scriptsize,
    breakatwhitespace=false,
    breaklines=true,
    captionpos=b,
    keepspaces=true,
    showspaces=false,
    showstringspaces=false,
    showtabs=false,
    tabsize=2,
    commentstyle=\color{codegray},
    keywordstyle=\color{blue},
    stringstyle=\color{codepurple},
    moredelim=[is][\color{codered}\textbf]{@@}{@@},
    moredelim=[is][\color{codegreen}\textbf]{^}{^},
    columns=flexible,
    aboveskip=-0.5\baselineskip,
    belowskip=-0.0\baselineskip,
    lineskip=0\baselineskip,
    numbers=left,
    numberstyle=\tiny\color{codegray},
    stepnumber=1,
    firstnumber=auto,
    numbersep=4pt,
    frame=none,
}
\lstset{style=gostyle}

\begin{table}[ht]
\caption{
Failure analysis example 1: false rejection
}
\label{tab:fail_case1}
\centering
\vspace{-5pt}
\small
\tabcolsep=0.06cm
\begin{tabular}{|>{\centering\arraybackslash}m{0.05\linewidth}|p{0.02\linewidth}|p{0.84\linewidth}|}
\hline
\multicolumn{3}{|c|}{\textbf{File:} \texttt{tests/components/fritzbox/test\_binary\_sensor.py}} \\
\hline
\multirow{4}{*}{$h_{-1}$} 
& & \begin{lstlisting}
  from homeassistant.const import (
      ...
\end{lstlisting} \\

 & & \cellcolor{red!20}
\begin{lstlisting}[firstnumber=3]
-     STATE_OFF,
\end{lstlisting} \\
& & \begin{lstlisting}[firstnumber=4]
      STATE_ON,
\end{lstlisting} \\
\cline{1-1}
\multirow{2}{*}{\shortstack{$h'$\\($h_{0}$)}}
& & \cellcolor{green!20}
\begin{lstlisting}[firstnumber=5]
+     STATE_UNAVAILABLE,
\end{lstlisting} \\
& & \begin{lstlisting}[firstnumber=6]
  )
\end{lstlisting} \\
\cline{1-1}
\multirow{5}{*}{$h_{-2}$} & &
\begin{lstlisting}[firstnumber=7]
      ...
      assert state
\end{lstlisting}\\
 & & \cellcolor{red!20}
\begin{lstlisting}[firstnumber=9, frame=none]
-     assert state.state == STATE_OFF
\end{lstlisting} \\
& & \cellcolor{green!20}
\begin{lstlisting}[firstnumber=10, frame=none]
+     assert state.state == STATE_UNAVAILABLE
\end{lstlisting} \\
& &
\begin{lstlisting}[firstnumber=11, frame=none]
      ...
\end{lstlisting} \\
\hline
\end{tabular}
\end{table}
\noindent\textbf{Example. }This example is observed during the digital twin evaluation of Claude combined with \editflow,
where the simulation replays the editing process of a real-world commit\footnote{\url{https://github.com/home-assistant/core/commit/4241c4c}}.
During the simulated editing sequence, the system iteratively provides recommendations based on the evolving edit context.
The relevant contextual information is summarized in Table~\ref{tab:fail_case1}.
Here, $h_{-x}$ denotes the prior edit that has already been applied at the current simulation step,
indexed relative to the current time.
Specifically, $h_{-1}$ is the most recently applied edit, $h_{-2}$ is the edit immediately before it, and so on.
The symbol $h'$ denotes the edit recommended by Claude at the current step,
which corresponds to the expected next edit at the current progress and is denoted as $h_0$.
The recovered partial-order relations can be expressed as:
$h_{-1} \sim h_{-2}$,
$h_{-2}\sim h_0$,
and $h_{-1} \perp h_0$.
Under the recovered graph, $h_0$ is a one-hop successor of $h_{-2}$ and therefore constitutes a flow-keeping (and correct) subsequent edit in the global context.}{(C2,C19)}

\highlight{However, during flow-aware filtering, \keeper adopts a 1-context sensitivity assumption and does not evaluate $\lambda(h, h')$ for all $h \in h_{<0}$.
Instead, it only evaluates the relation $\lambda(h_{-1}, h')$, considering the candidate edit solely with respect to the most recent prior edit.
In this case, \keeper yields $\lambda(h_{-1}, h') = \perp$, as the local flow continuity between $h_{-1}$ and $h'$ is not apparent.
As a result, the critical mental-flow context established by $h_{-2}$ is ignored,
leading \keeper to misclassify $h'$ as non-flow-keeping edit,
and to incorrectly filter out an otherwise correct recommendation.}{(C2,C19)}

\highlight{\noindent\textbf{Metric implications.}
Filtering out a flow-keeping edit directly reduces the proportion of flow-keeping recommendations in the final evaluation.
Although a rejected recommendation may enter the recycling loop and be deferred until a more suitable moment (details may refer to \autoref{sec:optimization}),
in this example, the rejected edit continues to suffer from the $1$-context sensitivity issue discussed above,
namely that its flow relation is only evaluated against the most recent prior edit.
As no subsequent edits occur that would render $h_0$ a one-hop successor of the immediately preceding edit,
the recommendation is eventually discarded,
resulting in a False Negative in flow-graph-independent metrics.
Specifically, a correct recommendation is permanently removed from the candidate set,
leading to simultaneous decreases in both precision and recall.}{(C2,C19)}

\subsection{Failure Mode II: Acceptance of Seemingly Flow-Keeping but Incorrect Edits}
\highlight{\textbf{Mechanism.}
In this scenario, \keeper fails to filter out certain recommendations because it deems them to be flow-keeping based on local flow continuity.
However, the inferred flow continuity does not correspond to the user's intended editing flow, leading \keeper to incorrectly accept edits that are locally coherent but misaligned with the user's actual intent.
This failure typically stems from the ambiguity of editing intentions,
where a single prior edit can logically lead to multiple divergent flow paths based on different intentions.
While \keeper is designed to verify flow continuity, it primarily validates whether a recommendation constitutes \emph{a} plausible continuation,
rather than ensuring it is \emph{the specific} continuation intended by the developer.
When a recommendation aligns with a logical but unintended flow branch (e.g., deletion instead of modification),
\keeper lacks the sufficient intent awareness to distinguish it from the ground truth,
leading to the acceptance of a locally coherent but globally incorrect edit under the developer's intended flow.}{(C2,C19)}

\definecolor{codegreen}{rgb}{0.0, 0.66, 0.30}
\definecolor{codered}{rgb}{1,0.05,0.05}
\definecolor{codegray}{rgb}{0.5,0.5,0.5}
\definecolor{codegray}{rgb}{0.5,0.5,0.5}
\definecolor{codeblue}{rgb}{0.0,0.0,0.5}
\definecolor{codered}{rgb}{0.6,0.0,0.0}
\definecolor{codepurple}{rgb}{0.58,0.0,0.82}
\definecolor{codeorange}{rgb}{1.0,0.5,0.0}
\definecolor{codecyan}{rgb}{0.0,0.6,0.6} 

\lstdefinestyle{lgeometry}{ xleftmargin = 20pt, xrightmargin = 0pt, frame = tb, framesep = \fboxsep, framexleftmargin = 20pt}

\lstdefinestyle{gostyle}{
    language=python,
    basicstyle=\ttfamily\scriptsize,
    breakatwhitespace=false,
    breaklines=true,
    captionpos=b,
    keepspaces=true,
    showspaces=false,
    showstringspaces=false,
    showtabs=false,
    tabsize=2,
    commentstyle=\color{codegray},
    keywordstyle=\color{blue},
    stringstyle=\color{codepurple},
    moredelim=[is][\color{codered}\textbf]{@@}{@@},
    moredelim=[is][\color{codegreen}\textbf]{^}{^},
    columns=flexible,
    aboveskip=-0.5\baselineskip,
    belowskip=-0.0\baselineskip,
    lineskip=0\baselineskip,
    numbers=left,
    numberstyle=\tiny\color{codegray},
    stepnumber=1,
    firstnumber=auto,
    numbersep=4pt,
    frame=none,
}
\lstset{style=gostyle}

\begin{table}[ht]
\caption{
Failure analysis example 2: false acceptance
}
\label{tab:fail_case2}
\centering
\vspace{-5pt}
\small
\tabcolsep=0.06cm
\begin{tabular}{|>{\centering\arraybackslash}m{0.05\linewidth}|p{0.02\linewidth}|p{0.84\linewidth}|}
\hline
\multicolumn{3}{|c|}{\textbf{File:} \texttt{zerver/lib/scheduled\_messages.py}} \\
\hline
\multirow{5}{*}{$h_{-1}$} 
& & \begin{lstlisting}
  from homeassistant.const import (
      ...
\end{lstlisting} \\
 & & \cellcolor{red!20}
\begin{lstlisting}[firstnumber=3]
-     StreamScheduledMessageAPI,
\end{lstlisting} \\
& & \begin{lstlisting}[firstnumber=4]
      UserProfile,
  )
\end{lstlisting} \\
\hline
\multicolumn{3}{|c|}{\textbf{File:} \texttt{zerver/models.py}} \\
\hline
\multirow{5}{*}{$h_{0}$}
& & \cellcolor{red!20}\begin{lstlisting}[firstnumber=6]
- class StreamScheduledMessageAPI(TypedDict):
\end{lstlisting} \\
& & \cellcolor{green!20}
\begin{lstlisting}[firstnumber=7]
+ class APIScheduledStreamMessageDict(TypedDict):
\end{lstlisting} \\
& & \begin{lstlisting}[firstnumber=8]
      scheduled_message_id: int
      to: int
      ...
\end{lstlisting} \\
\hline
\multirow{5}{*}{$h'$} 
 & & \cellcolor{red!20}
\begin{lstlisting}[firstnumber=6]
- class StreamScheduledMessageAPI(TypedDict):
-     scheduled_message_id: int
-     to: int
-     type: str
\end{lstlisting} \\
& &
\begin{lstlisting}[firstnumber=10]
      content: str
\end{lstlisting} \\
\hline
\end{tabular}
\end{table}
\highlight{\noindent\textbf{Example.}
This example\footnote{\url{https://github.com/zulip/zulip/commit/02fafb03}} is observed during the digital twin evaluation of Cursor combined with \editflow.
The relevant edits are available in Table~\ref{tab:fail_case2}.
Here, $h_{-1}$ denotes the most recently applied prior edit at the current simulation step,
$h_0$ denotes the expected ground-truth next edit,
and $h'$ denotes the edit recommended by Cursor at the current step.
In this case, $h_{-1}$ removes the import of \texttt{StreamScheduledMessageAPI}.
The ground-truth next edit $h_0$ then renames the corresponding class in \texttt{zerver/models.py},
changing \texttt{Stream\allowbreak Scheduled\allowbreak Message\allowbreak API} to \texttt{API\allowbreak Scheduled\allowbreak Stream\allowbreak Message\allowbreak Dict}.
However, Cursor recommends $h'$, which instead removes (or substantially deletes) the existing definition.
\keeper identifies the transition from $h_{-1}$ to $h'$ as flow-continuous,
e.g., $\lambda(h_{-1}, h') \in \{\prec, \sim\}$,
since both edits operate on the same API and appear cognitively coherent as a follow-up to removing its import.
As a result, \editflow does not filter out $h'$.}{(C2,C19)}

\highlight{Nevertheless, $h'$ does not align with the developer's actual intent in this commit,
which is to consistently \emph{rename} the API type rather than delete its definition.
This constitutes a false acceptance:
a recommendation that appears flow-keeping under the immediate context,
yet is incorrect and flow-breaking with respect to the human-intended ground-truth edit $h_0$.}{(C2,C19)}

\noindent\textbf{Metric implications.}
\highlight{As $h'$ is accepted as an incorrect recommendation with respect to ground-truth,
it is counted as a false positive and classified as a flow-breaking outcome with respect to the human ground-truth edit sequence.
Consequently, this case increases the proportion of flow-breaking recommendations
and simultaneously lowers precision in flow-graph-independent metrics.
Meanwhile, recall is not directly affected by this false positive.}{(C2,C19)}

\section{Threats to Validity}
We discuss three major types of threats to the validity of our study.

\highlight{\textbf{External validity.}
Our benchmark primarily consists of 100 Python commits and one industrial dataset that cannot be publicly released due to compliance restrictions.
The dataset uses Git commits solely as a data source and does not reflect GitHub-specific workflows such as pull requests or code reviews.
Although prior work suggests that developer editing behaviours are broadly consistent across programming languages \cite{ray2014large, zhang2023multilingual, effendi2023language},
we acknowledge that our data composition may limit the generalizability of our findings to other programming languages, development workflows, or industrial settings.
Validating the mental-flow model on additional languages and non-GitHub workflows would further strengthen external validity.
To mitigate this concern, we selected projects covering diverse functionalities and repository sizes,
and we plan to extend our benchmark to multi-language repositories and additional industrial contexts in future work.}{(C15)}

\textbf{Construct validity.}
The notion of \textit{mental-flow alignment} is inherently abstract, and our operationalization through the \textsc{Keep}/\textsc{Jump}/\textsc{Revert}/\textsc{Break} taxonomy may only approximate developers’ cognitive states. 
Although we employed double annotation and inter-annotator agreement checks to ensure labelling reliability, some subjectivity is inevitable.

\textbf{Internal validity.}
\highlight{First, the digital-twin simulation assumes that developers always make correct decisions, while abstracting away other cognitive and behavioural factors,
such as developers' trust in recommendations and individual programming habits.
These simplifying assumptions reduce the complexity of real developer behaviour and may therefore overestimate the effectiveness of flow-aware optimization.
Nevertheless, this abstraction provides a reproducible and controlled environment for benchmarking heterogeneous agents (e.g., Cursor vs. Claude) by eliminating human variance.}{(C17)}

\highlight{Second, the collected edit-order data, while reflecting real developer actions, does not guarantee the most mental-flow-compatible editing sequence.
Developers may occasionally follow suboptimal or counter-intuitive paths due to habits, local context, or external constraints.
Although our manual inspection suggests such cases are rare, this imperfection may introduce noise into the learned flow model.}{(C14)}

Third, in \textbf{RQ2}, we estimate ordering quality based on the proportion of \textit{violations} against the observed ground-truth edit order. 
This evaluation protocol may introduce an optimistic bias, as some false-positive relations in the predicted partial order cannot be falsified by a single observed editing trajectory. 
Consequently, the measured performance may slightly overestimate the true ordering accuracy.
Nevertheless, this controlled setup enables consistent comparison across systems by eliminating human variance. 

Fourth, our approach relies on large language models for edit order inference, which are inherently stochastic. 
As a result, identical inputs may occasionally yield different inferred partial orders, potentially introducing variability into flow-aware filtering and downstream evaluation. 
While this randomness does not affect the conceptual validity of our approach, it may influence the stability of individual predictions. 
In practice, such variability can be mitigated through standard engineering techniques, such as lowering decoding temperature, applying majority voting across multiple runs, or aggregating predictions over multiple samples. 
We leave a systematic exploration of these stabilization strategies to future work.

Future work will explore incorporating more nuanced cognitive and behavioral factors into the simulation, as well as real-world IDE traces and multi-path simulation, to better approximate realistic developer behavior.

\section{Related Work}

\subsection{Code Edit Recommendation}
Static analysis methods for code edit recommendations primarily rely on extracting edit patterns.
For instance, 
CCDemon \cite{lin2015clone} exploits clone differences for suggesting pasted-code modifications,
Overwatch \cite{overwatch} leverages IDE-logged traces,
Pyevolve \cite{dilhara2023pyevolve} mines frequently repeated code change patterns (CPATs) from the version histories of Python projects,
and AppEvolve \cite{fazzini2019automated} extracts API changes from other applications.
Solutions of this kind prioritize efficiency and syntactic correctness. However, they suffer from poor generalization when extended to more diverse editing scenarios.

Recent advances in large language models have significantly reshaped software engineering, 
enabling applications such as code completion, bug fixing, debugging, refactoring, and automated testing \cite{liu2025guipilot,ren2023api,lin2021graph, lin2020recovering, qi2025intention, cai2025automated}. 
Beyond single-shot tasks, recent work has increasingly explored code edit recommendation,
such as CoditT5 \cite{ZhangETAL22CoditT5}, CODEEDITOR \cite{li2023codeeditor}, and CCT5 \cite{lin2023cct5}, propose pre-trained models specifically designed for code editing tasks.
To further enhance performance, various forms of contextual retrieval have been explored; for instance, GrACE \cite{grace} incorporates relevant prior edits, while SARGAM \cite{liu2024automated} adopts a Search-Generate-Modify paradigm.
Other efforts, including Codit \cite{codit} and Recoder \cite{recoder}, integrate abstract syntax tree (AST) representations to improve the model’s understanding of code structure and syntax.
Moreover, recent work such as CodePlan \cite{bairi2024codeplan} integrates LLMs with static analysis tools, including dependency graphs, to support more accurate reasoning over code changes.
However, these approaches remain largely confined to evaluations on offline benchmarks and have not been designed or validated for project-wise subsequent edit recommendation in realistic development scenarios.

For end-to-end editing scenarios, the most relevant work in the academic community is \coedpilot \cite{code-edit-pilot}, which proposes a Transformer-based framework for jointly localizing and generating subsequent edits.
Building upon this framework, \cite{liu2025learning} incorporates static analysis tools and fine-grained code edit representations to further enhance performance.
Meanwhile, numerous products have been developed in the industry to support similar workflows.
Cursor \cite{cursor} was the first IDE to support proactive subsequent edit suggestions, introduced in November 2023. 
In this design, suggested edits are proactively rendered as ghost text.
Developers can press \kbdbox{Tab} once to navigate to the suggestion and press \kbdbox{Tab} a second time to accept it.
Cursor also supports the command-response paradigm, in which users describe their intentions, 
and the agent actively searches inside the codebase and returns corresponding code edit suggestions along with an explanation.
The above 2 solutions are then quickly adopted by other products, including Copilot \cite{copilot} and Windsurf \cite{windsurf}.
More recently, agentic approaches have emerged that operate directly through command-line interfaces. 
Claude Code \cite{claude-code}, released in late 2024, enables developers to delegate coding tasks to an AI agent that can read, write, and execute code autonomously within a specified project directory.
Other similar products include Gemini CLI \cite{gemini-cli} and Qwen Code \cite{qwen-code}.

\subsection{Developer Flow and Productivity}

\label{subsec:flow_productivity}
\highlight{\textbf{Flow as a core construct in productivity.}
Mental flow is a well-established psychological construct defined as a mental state in which a person performing an activity is fully immersed in a feeling of energized focus, full involvement, and enjoyment~\cite{csikszentmihalyi1990flow}. 
Since its introduction, flow has been widely adopted in software engineering research as a key lens for understanding developer productivity and experience. 
The SPACE framework, now widely recognized as an industry standard for evaluating developer productivity, treats \emph{Efficiency and Flow} as one of its five independent dimensions~\cite{forsgren2021space}. 
Similarly, the DevEx framework ~\cite{noda2023devex} distills developer productivity into three core dimensions: 
feedback loops, cognitive load, and flow state,
positioning flow as a first-class driver of effective software development. Together, these frameworks establish flow not as a subjective afterthought, but as a foundational construct for reasoning about productivity in both academic and industrial contexts.}{(C1,C13)}

\highlight{\textbf{Quantitative relationship between flow and productivity.}
Prior work provides strong quantitative evidence linking sustained flow to productivity gains and interruptions to substantial productivity loss. Meyer \emph{et al.} report that over 50\% of developers associate productive workdays with uninterrupted flow and minimal context switching~\cite{meyer2014software}. Large-scale GitHub and DevEx studies further show that developers who sustain deep, uninterrupted work report 30--50\% higher perceived productivity and up to 20\% higher innovation~\cite{Eirini_GitHub,noda2023devex}.  
Conversely, the cost of breaking flow is disproportionate to the frequency of interruptions. After an interruption, developers require on average 23 minutes and 15 seconds to fully resume their original task (\emph{recovery tax})~\cite{mark2008cost}. Even a modest interruption rate of 10\% can nearly double total task completion time, indicating a non-linear, compounding effect of flow disruption on productivity~\cite{sum2015analysis}.}{(C1,C13)}

\highlight{\textbf{Disruptions of flow by modern coding tools.}
While flow is beneficial, a growing body of evidence shows that modern development tools,
particularly AI-based coding assistants,
can both support and disrupt it. 
Beyond their well-known positive effects, 
prior studies document substantial downsides of LLM-based coding assistants. Unwanted or poorly timed suggestions, verbose outputs, interface switching, and the need to constantly verify generated code introduce frequent interruptions that break developers’ flow continuity~\cite{mohamed2025impact}.
Controlled and observational studies further demonstrate that such interruptions lead to excessive context switching between thinking, reading, and debugging, resulting in increased cognitive load and degraded task performance~\cite{prather2023s}. 
Qualitative evidence from both professional and novice developers highlights frustration, loss of focus, and exhaustion when tools intrude into the developer’s mental flow, for example by flooding the screen with unsolicited suggestions or requiring sustained attention to tool interaction rather than problem solving~\cite{prather2023s,pimenova2025good}. 
These findings align with earlier work showing that interruptions, delays, and tool friction are among the primary barriers to experiencing flow in software engineering tasks~\cite{meyer2014software,janssens2022go}.}{(C1,C13)}

\highlight{Taken together, prior work establishes flow as a central, measurable determinant of developer productivity, quantifies its impact, and provides converging evidence that poorly aligned tooling can significantly disrupt flow.
This body of research motivates the need for developer assistance systems that explicitly preserve cognitive continuity rather than inadvertently fragment it.}{(C1,C13)}

\section{Conclusion}

This paper presents EditFlow, the first framework for benchmarking and optimizing code edit recommendation systems from the perspective of developers' mental flow. 
We identified an essential gap between technical accuracy and developer productivity, 
showing that existing assistants often disrupt developers' cognitive continuity due to their reliance on static commit snapshots.

To bridge this gap, EditFlow integrates three synergistic components: 
a prompt auto-tuning mechanism that reconstructs edit orders, 
a digital twin for flow-aware evaluation, 
and a unified optimization wrapper for filtering and re-ranking recommendations. 
Extensive experiments on annotated and industrial datasets show that EditFlow improves edit-order reconstruction accuracy by 63.81\%, reduces flow violations by 75\%, and increases recommendation precision by 66.99\%. 
A controlled user study with 32 participants further confirms 25.11\% faster task completion and significantly higher perceived recommendation quality.

Overall, EditFlow establishes flow-awareness as a new dimension for evaluating and enhancing AI-assisted code editing. By reconstructing and simulating developers' editing processes, our framework provides a foundation for future research on cognitively aligned recommendation systems and seamless human-AI code collaboration.

\section*{Data-Availability Statement}
The source code, auto-tuned prompt, dataset, and experiment results are available at \cite{homepage}.
Please note that a portion of the dataset provided by our industry collaborator cannot be publicly released due to compliance and confidentiality agreements.

\begin{acks}

This research is supported in part by the National Natural Science Foundation of China (62572300), 
the Ministry of Education, Singapore (MOE-T2EP20124-0017, MOET32020-0004), 
the National Research Foundation, Singapore and the Cyber Security Agency under its National Cybersecurity R\&D Programme (NCRP25-P04-TAICeN), 
DSO National Laboratories under the AI Singapore Programme (AISG Award No: AISG2-GC-2023-008-1B), 
and Cyber Security Agency of Singapore under its National Cybersecurity R\&D Programme and CyberSG R\&D Cyber Research Programme Office. 
Any opinions, findings and conclusions or recommendations expressed in this material are those of the author(s) and do not reflect the views of National Research Foundation, Singapore, Cyber Security Agency of Singapore as well as CyberSG R\&D Programme Office, Singapore.

\end{acks}


\bibliographystyle{ACM-Reference-Format}
\bibliography{Reference}



\end{document}